\newcommand{\Z}{\mathbb{Z}}
\newcommand{\Zn}{\mathbb{Z}_n}
\newcommand{\F}{\mathbb{F}}
\newtheorem{theorem}{Theorem}
\newtheorem{lemma}{Lemma}
\newtheorem{corollary}{Corollary}
\theoremstyle{definition}
\newtheorem{definition}{Definition}
\theoremstyle{remark}
\newtheorem{remark}{Remark}
\newtheorem{example}{Example}
\title{A Number Theoretic Approach to Cycles in LDPC Codes}
\author{Julia Lieb \and Simran Tinani}
\begin{document}
\maketitle

\begin{abstract} 
LDPC codes constructed from permutation matrices have recently attracted the interest of many researchers. A crucial point when dealing with such codes is trying to avoid cycles of short length in the associated Tanner graph, i.e. obtaining a possibly large girth. In this paper, we provide a framework to obtain constructions of such codes. We relate criteria for the existence of cycles of a certain length with some number-theoretic concepts, in particular with the so-called Sidon sets. In this way we obtain examples of LDPC codes with a certain girth. Finally, we extend our constructions to also obtain irregular LDPC codes.
\end{abstract}

\section{Introduction}


LDPC (Low-density parity-check) codes were first introduced in the thesis of Gallager \cite{gall} and started gaining traction for practical use when they were rediscovered by MacKay \cite{mackay}. These codes have recently been attracting the interest of many researchers, since bit-flipping and message-passing decoding algorithms allow to decode this type of codes very efficiently. They have even been proven to be able to reach the famous Shannon capacity limit \cite{mackay}. With the codes' sparse parity-check matrix one can associate a sparse bipartite graph called the Tanner graph \cite{tanner}. 
The performance of iterative decoding
algorithms on these codes has been shown to depend strongly on the structure of the associated Tanner graph, and in particular, on the presence of short cycles in it. For an exposition on this, see, for example, \cite{girth}, \cite{lucas}, and \cite{mck2}.
Therefore, it is desirable to come up with constructions of codes where the size of the smallest cycle in the graph, the so-called girth, is not too small.

One subclass of LDPC codes that has been considered in this direction of research are LDPC codes built from permutation matrices, i.e. LDPC codes whose parity-check matrix is a block matrix, where each constituent block is given by a permutation matrix. Besides considering general permutation matrices as building blocks \cite{perm}, some papers focus on special permutations, such as permutations given by quadratic polynomials \cite{quadratic}, linear polynomials \cite{linear}, or circulant permutations \cite{qcldpc}, \cite{ma-lee}. If one uses circulant permutation matrices as building blocks, one obtains the so-called quasi-cyclic LDPC codes (see \cite{Liva2017DesignOL} for a survey). This class of quasi-cyclic LDPC codes has been most extensively studied as their circulant structure allows for easy implementation. However, it is known that the performance of LDPC codes built from circulant permutation matrices is not optimal (see e.g. \cite{partially}) as their girth and minimum distance are limited by an upper bound.

In \cite{unifying}, a general criterion to determine the girth of an LDPC code whose parity-check matrix consists of permutation matrices was presented, building on earlier work in \cite{mcgowan} and \cite{wu}. Also in \cite{unifying}, this criterion was further simplified and studied in more detail for the case that the parity-check matrix consists of circulant permutation matrices.

In this work, we study parity-check matrices consisting of permutation matrices that are not necessarily circulant but are all powers of a fixed permutation f. With this choice, we relate the presence of cycles of a certain length in the corresponding Tanner graph to some additive number theoretic properties of the set of powers of $f$ which appear in the matrix. In particular, we find interesting connections with the concept of Sidon sets, introduced in \cite{Sidon1932EinS}, and their modular variations. In this way, we provide new constructions of LDPC codes with easy to determine girth. In fact, for these constructions, we give explicit formulations of equivalent conditions for 4- and 6- cycles, and a necessary condition for 8-cycles, exclusively in terms of the arithmetic properties of the power sets. For the special case with only two rows of block matrices, 6-cycles are impossible, and we provide if and only conditions for 4 and 8-cycles, thus allowing the construction of girth 12 matrices. 

We remark that our results contain quasi-cyclic LDPC codes as a special case. Indeed, any quasi-cyclic LDPC code consists of blocks, each corresponding to a permutation of the form $i \mapsto i+ k$ for some fixed value of $k$. 
Clearly, every such permutation can be obtained as a power of the cyclic permutation $f:\mathbb Z_n\rightarrow\mathbb Z_n, \ i\mapsto i+1$.

The paper is structured as follows. In Section \ref{fossorier}, we recall some theory from \cite{foss} providing criteria for cycles in the Tanner graph of a certain length. In Section \ref{conc}, we define some concepts from number theory that we will need for later sections of the paper. In Section \ref{construction}, we introduce our code construction together with some general criterion for cycles in the Tanner graph of a certain length. Then, in Section \ref{cycles}, we give more explicit conditions to avoid $4$, $6$ and $8$-cycles relating them to Sidon sets. Finally, in Section \ref{irregular}, we explain how our code constructions can be extended to obtain irregular LDPC codes with the same girth but higher code rates. We conclude with some final remarks in Section \ref{concl}.

Throughout the paper, for any permutation we will use the notation as a function or via the corresponding permutation matrix interchangeably.

\section{Fossorier condition}\label{fossorier}

As starting point, we use the theory developed in \cite{foss} by Fossorier, which introduces another way to obtain criteria for cycles in the Tanner graph of an LDPC code.

Let $H$ be a block parity check matrix of the form $\begin{pmatrix} P_{0,0} & \ldots & P_{0,J-1} \\
\vdots & \ddots & \vdots \\
P_{L-1,0} & \ldots &  P_{L-1, J-1}
\end{pmatrix}.$
Each $P_{l,j}$ is an $n\times n$ permutation matrix defined as follows by a permutation $R_{i,j}:\Z_n \rightarrow \Z_n$. \[P_{l,j}(i,k) = 1 \ \text{if} \ i =R_{l,j}(k)\] 

We will write the entries of $P_{j,l}$ as $h_{jm+j', lm+l'}$, $0\leq j' \leq J-1$, $0 \leq l' \leq L-1$. Then, $h_{jm+j', lm+l'}=1 \iff l' = R_{(j,l)}(j')$.

For simplicity, we will write $H$ in terms of the permutations, $H=\begin{pmatrix} R_{0,0} & \ldots & R_{0,J-1} \\
\vdots & \ddots & \vdots \\
R_{L-1,0} & \ldots &  R_{L-1, J-1}
\end{pmatrix}$.

\begin{remark}
As row permutations inside $H$ do not change the code and column permutations inside $H$ lead to an equivalent code, one can assume that $R_{i,j}=id$ if $i=0$ or $j=0$. Thus, henceforth we will deal with parity check matrices of the form 
$$H=\begin{pmatrix} Id_{n\times n} & Id_{n\times n} & \ldots &Id_{n\times n}  \\
Id_{n\times n} & R_{1,1} & \ldots & R_{1,L-1} \\
\vdots & \vdots & \ddots & \vdots \\
Id_{n\times n} & \ldots &  R_{L-1, J-2} & R_{L-1, J-1}
\end{pmatrix}.$$

\end{remark}

For simplicity, we will often write 1 instead of $Id_{n\times n}$.

\begin{definition}[Cycle]
A cycle in $H$ is a sequence of positions $(x_i, y_i)$ of the form $h_{x_0, y_0}=1, h_{x_1, y_0}=1, h_{x_1, y_0}=1, \ldots, h_{x_{k-1}, y_{k-1}}=1, h_{x_{0}, y_{k-1}}, h_{x_0, y_0}$ in $H$. In other words, a cycle is defined by positions with entry 1, obtained by changing alternatively row or column only, and such that all positions are distinct, except the first and last. Such a cycle is said to have length $2k$.
\end{definition}

Since the $P_{i,j}$'s are permutation matrices, we must have each entry $h_{x_i, y_i}$ in a different permutation block matrix. In this paper, we follow the convention in \cite{foss}, so that the first change is of a row, followed by a column. Consequently, each cycle can be associated to a unique path $(j_0,l_0) \rightarrow \ldots \rightarrow (j_{k-1}, l_{k-1}) \rightarrow (j_k, j_{k-1}) \rightarrow (j_k, l_k)=(j_0, l_0)$ of indices labelling the permutation matrices, $0\leq j_i \leq J-1, 1\leq l_i \leq L-1$.

\begin{definition}[Girth]
The girth of $H$, denoted $g(H)$, is defined to be the smallest positive even integer $2k$, $k>1$, such that $H$ contains a $2k$-cycle.
\end{definition}

\begin{theorem}[\cite{foss}]
There exists a cycle associated to the sequence $$(j_0,l_0), \ldots, (j_{k-1}, l_{k-1}), (j_{0}, j_{k-1}), (j_k, l_k)=(j_0, l_0)$$ if and only if there exists a column index $c_0$, $0\leq c_0 \leq l_0$, 
such that $R_{j_0,l_0}(c_0) =  R_{j_0, l_{k-1}}(c_{k-1})$, or $c_k = c_0$, where $c_{i+1} = (R_{j_{i+1}, l_{i+1}}^{-1}(R_{j_{i+1}, l_i}(c_{i}))$ for $0 \leq i \leq k-1$.
\end{theorem}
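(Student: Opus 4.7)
The plan is to show that fixing the starting within-block column $c_0$ in the block $(j_0,l_0)$ forces all subsequent within-block positions along the walk via the permutation structure, and that the stated condition is precisely the constraint that the resulting closed walk returns to its starting point.

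For the forward direction, suppose a cycle realizes the given block sequence. Since each block is a permutation matrix, every $1$-entry of block $(j,l)$ has within-block coordinates of the form $(R_{j,l}(c), c)$. I would let $c_0$ be the within-block column of the starting $1$-entry in $(j_0,l_0)$, so its within-block row is $R_{j_0,l_0}(c_0)$. The first step of the cycle changes the absolute row while preserving the absolute column, hence stays in column-block $l_0$ and enters row-block $j_1$; since the within-block column remains $c_0$, the within-block row must equal $R_{j_1,l_0}(c_0)$. The next step changes the absolute column while preserving the absolute row, so we enter $(j_1,l_1)$ with within-block row $R_{j_1,l_0}(c_0)$ and within-block column $c_1$ forced by $R_{j_1,l_1}(c_1) = R_{j_1,l_0}(c_0)$. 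Iterating this argument along the alternating row/column changes produces exactly the recurrence $c_{i+1} = R_{j_{i+1}, l_{i+1}}^{-1}(R_{j_{i+1}, l_i}(c_i))$, which therefore describes the within-block columns along the entire walk.

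For closure, after $2k-1$ steps the walk reaches the position $(R_{j_0, l_{k-1}}(c_{k-1}), c_{k-1})$ in block $(j_0, l_{k-1})$ (using $j_k = j_0$). The final column change back to $(j_0,l_0)$ requires the absolute row to match the starting row, which translates to $R_{j_0, l_{k-1}}(c_{k-1}) = R_{j_0, l_0}(c_0)$, giving the first formulation in the statement. Applying $R_{j_0,l_0}^{-1}$ and extending the recurrence one more step to define $c_k$ yields the equivalent formulation $c_k = c_0$, explaining the ``or'' in the statement as two interchangeable expressions of the same closure condition. For the converse, starting from any $c_0$ satisfying the condition, the recurrence explicitly produces $2k$ $1$-entries in the prescribed blocks, and the condition guarantees that they close up into a walk of the right combinatorial shape.

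The main delicate point I anticipate is distinctness: a closed walk is a bona fide $2k$-cycle only if its $2k$ positions are pairwise distinct, as demanded by the definition. I would argue that the standing assumption $j_{i+1} \neq j_i$ and $l_{i+1} \neq l_i$ on consecutive blocks, combined with the bijectivity of the $R_{j,l}$, prevents any two of the constructed positions from lying in the same block, and that a collapse of the walk to a shorter length would contradict the prescribed length of the block sequence. Any residual degenerate configurations would need to be excluded either by a genericity hypothesis or by a direct case check.
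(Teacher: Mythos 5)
Your argument is essentially the same as the paper's: both track the within-block column indices along the alternating row/column walk, use the permutation-matrix structure to force the recurrence $c_{i+1} = R_{j_{i+1}, l_{i+1}}^{-1}(R_{j_{i+1}, l_i}(c_{i}))$, and identify closure of the walk with $c_k = c_0$ (equivalently $R_{j_0,l_0}(c_0) = R_{j_0, l_{k-1}}(c_{k-1})$). The distinctness concern you flag at the end is not treated in the paper's proof either, so your proposal is at least as complete on that point.
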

\begin{proof}
 Suppose that such a column index $c_0$ exists and $c_i$'s are defined by 
 $$c_{i+1} = R_{j_{i+1}, l_{i+1}}^{-1}(R_{j_{i+1}, l_i}(c_i)) \ \forall i \in \{0, 1, \ldots, k-1\}.$$ Then, clearly, 
\begin{align}\label{star1}
    j_{i+1}m+R_{j_{i+1}, l_i}(c_i) = j_{i+1}m + R_{j_{i+1}, l_{i+1}}(c_{i+1}) \forall i \in \{0, 1, \ldots, k-1\} 
\end{align}
Write $x_i =  j_{i}m+R_{j_{i}, l_i}(c_i)  $, $y_i = l_im + c_i$. Then, the entries $h_{(x_0, y_0)}, h_{(x_1, y_1)}, \ldots, h_{(x_{k-1}, y_{k-1})}, h_{(x_k, y_k)} $ are all clearly 1. Further, by \eqref{star1}, $j_{i+1}m + R_{j_{i+1}, l_i}(c_i) = x_{i+1}$, so $\forall i \in \{0, 1, \ldots, k-1\}$, $h_{(x_{i+1}, y_i)}=1$. Finally, $c_k= R_{j_k, l_k}^{-1}(R_{j_k, l_{k-1}}(c_{k-1}) = R_{j_0,l_0}^{-1}(R_{j_0, l_{k-1}}(c_{k-1})) = c_0$. Thus, the entries at the positions $${(x_0, y_0)}, {(x_1, y_1)}, \ldots, {(x_{k-1}, y_{k-1})}, {(x_k, y_k)} = (x_0, y_0)$$ form a cycle. Conversely, suppose that there is a cycle associated to the sequence $$(j_0,l_0), (j_1, l_0) \ldots, (j_{k-1}, l_{k-1}), (j_{0}, j_{k-1}), (j_k, l_k)=(j_0, l_0).$$ Then there exist positions \begin{align*}
    {(x_0, y_0)}, (x_1, y_0),  \ldots, (x_0, y_{k-1}), {(x_k, y_k)} = (x_0, y_0)
\end{align*} whose entries are all 1's, where $x_i =  j_{i}m+R_{j_{i}, l_i}(c_i)  $, $y_i = l_im + c_i$. To have $h_{(x_{i+1}, y_i)}=1$ $\forall i \in \{0, 1, \ldots, k-1\}$, we additionally need $j_{i+1}m + R_{j_{i+1}, l_i}(c_i) = x_{i+1}$, or $c_{i+1} = (R_{j_{i+1}, l_{i+1}}^{-1}(R_{j_{i+1}, l_i}(c_{i}))$. This completes the proof.
\end{proof}

The following is an easy generalization of Theorem 2.5 in \cite{foss} and provides an upper bound for the girth.

\begin{theorem}\label{12c} Let $L>2$. If $H$ has a submatrix of the form 
 $\begin{pmatrix} 1 & 1 & 1 \\ 1 & \sigma_1 & \sigma_2 \end{pmatrix}$ where $\sigma_1\sigma_2 = \sigma_2\sigma_1$ 
 then $H$ contains a 12-cycle, and so $g(H) \leq 12$.
 \end{theorem}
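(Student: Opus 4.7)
The strategy is to apply the preceding theorem with $k=6$, exhibiting an explicit closed path of length six through the blocks of the given $2\times 3$ submatrix. Label its two block-rows by $0,1$ and its three block-columns by $0,1,2$, so that $R_{0,l}=\Id$ for all $l$, $R_{1,0}=\Id$, $R_{1,1}=\sigma_1$, and $R_{1,2}=\sigma_2$. I take the path that alternates rows and cycles once through the columns,
\[
(j_0,l_0),(j_1,l_1),\ldots,(j_5,l_5)=(0,0),(1,1),(0,2),(1,0),(0,1),(1,2),
\]
closing back to $(j_0,l_0)$. Each of the six blocks of the submatrix appears exactly twice among the $12$ positions of the resulting walk, but consecutive positions always lie in different blocks, so the formal requirements of the previous theorem are met.

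The core computation is to evaluate the recursion $c_{i+1}=R_{j_{i+1},l_{i+1}}^{-1}(R_{j_{i+1},l_i}(c_i))$ along this path. Transitions that stay inside block-row $0$, or that enter or leave block-column $0$, are all trivial; the remaining four steps each contribute one factor of $\sigma_1^{\pm 1}$ or $\sigma_2^{\pm 1}$. Careful bookkeeping yields
\[
c_6=\sigma_2^{-1}\sigma_1\sigma_2\sigma_1^{-1}(c_0),
\]
which is a commutator and hence collapses to the identity precisely under the hypothesis $\sigma_1\sigma_2=\sigma_2\sigma_1$. Thus $c_6=c_0$ for every choice of $c_0$, and the preceding theorem produces a closed walk of length $12$ in the Tanner graph of $H$.

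It remains to pass from this closed walk to a cycle of length at most $12$. If the $12$ positions traversed are pairwise distinct, the walk is itself a $12$-cycle and we are done. Otherwise, a standard splicing argument at any repeated edge yields a strictly shorter closed walk of positive even length, which in the simple bipartite Tanner graph contains a cycle of length $\le 10$. In either case $g(H)\le 12$.

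The step I expect to be the main obstacle is the commutator bookkeeping in the recursion: one has to identify exactly where the $\sigma_i$'s and their inverses are introduced at each transition, and recognise the resulting word as $[\sigma_2^{-1},\sigma_1]$, so that the commutativity hypothesis applies in a single line. The degeneracy/short-cycle dichotomy that finishes the proof is routine.
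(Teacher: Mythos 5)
Your proposal is correct and follows essentially the same route as the paper: you choose exactly the same block path (rows alternating, columns cycling $0,1,2,0,1,2$), and the recursion collapses to the same word $\sigma_2^{-1}\sigma_1\sigma_2\sigma_1^{-1}(c_0)=c_0$ under commutativity. Your additional remark handling possible repeated positions via splicing is a harmless refinement that the paper simply omits.
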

 \begin{proof}
 Since the $\sigma_i$'s commute, the path given by
 \begin{align*}
     1 \rightarrow 1 \rightarrow \sigma_1 \rightarrow 1 \rightarrow 1 \rightarrow \sigma_2  \rightarrow 1 \rightarrow 1 \rightarrow 1 \rightarrow \sigma_1  \rightarrow \sigma_2 \rightarrow 1 \rightarrow 1
 \end{align*} which can be visualized as  $\begin{pmatrix} 1 & 1 & 1 & 1 & 1 & 1 \\ 1 & \sigma_1 & \sigma_2 & 1 & \sigma_1 & \sigma_2 \end{pmatrix},$ defines a 12-cycle, since $c_1= \sigma_1^{-1}(c_0)$, $c_2=c_1$, $c_3= \sigma_2(c_2)= \sigma_2\sigma_1^{-1}(c_0) $, $c_4 =c_3$, $c_5= \sigma_2^{-1}\sigma_1(c_4)=\sigma_2^{-1}\sigma_1\sigma_2\sigma_1^{-1}(c_0) = (\sigma_2^{-1}\sigma_1)(\sigma_2\sigma_1^{-1})(c_0) = c_0$, $c_6= c_5=c_0$. 
 \end{proof}

 \begin{remark}
 From the above proof, we see that in order to avoid a 12-cycle, we require non-commuting permutations $\sigma_1$ and $\sigma_2$ such that $(\sigma_2^{-1}\sigma_1)(\sigma_2\sigma_1^{-1})$ is a derangement. Since we are concerned with matrices using commuting permutations in this paper, all our matrices have a maximum girth of 12.
 \end{remark}
 
 More generally, we can make the following statement.

 \begin{theorem} Let $L>2$. If $H$ has a submatrix of the form 
 $\begin{pmatrix} 1 & 1 & 1 & \ldots & 1\\ 1 & \sigma_1 & \sigma_2 & \ldots & \sigma_{2r} \end{pmatrix}$ for $r$ odd and there exists some $c \in \Z_m $  such that $\sigma_i\sigma_j (c)= \sigma_j\sigma_i(c)$ and $(\sigma_1\sigma_3\ldots \sigma_{2t-1})(c) \neq (\sigma_2\sigma_4 \ldots \sigma_{2t})(c)$ for $1\leq t<r$ and $(\sigma_1\sigma_3\ldots \sigma_{2r-1})(c) = (\sigma_2\sigma_4 \ldots \sigma_{2r})(c)$, then $H$ has a $4\cdot (2r+1)$-cycle.
 \end{theorem}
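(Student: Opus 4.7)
The plan is to mirror and extend the construction of Theorem~\ref{12c}. Specifically, I would exhibit an explicit closed walk in the block matrix of length $4(2r+1)$ using the corner sequence $(j_i, l_i) = (i \bmod 2,\; i \bmod (2r+1))$ for $0 \le i \le 4r+2$. This walk zigzags between the two block rows of the submatrix and traverses each of the $2r+1$ block columns twice, giving the required $4(2r+1)$ positions. For $r=1$ this reduces exactly to the 12-cycle path written out in the proof of Theorem~\ref{12c}.

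Next, one applies the recursion $c_{i+1} = R_{j_{i+1}, l_{i+1}}^{-1}(R_{j_{i+1}, l_i}(c_i))$ along this walk. Since the top row of the submatrix is identity, every update with $j_{i+1} = 0$ leaves $c_i$ unchanged, while every update with $j_{i+1} = 1$ multiplies $c_i$ on the left by a factor $\sigma_{l_{i+1}}^{-1} \sigma_{l_i}$ (setting $\sigma_0 := \mathrm{id}$). There are exactly $2r+1$ such non-trivial factors along the walk; using the pointwise commutation hypothesis $\sigma_i \sigma_j(c) = \sigma_j \sigma_i(c)$ to reorder products, $c_{4r+2}$ collapses to an expression built from the two partial products $P := \sigma_1\sigma_3\cdots\sigma_{2r-1}$ and $Q := \sigma_2\sigma_4\cdots\sigma_{2r}$ evaluated at $c$. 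The equality $P(c) = Q(c)$ at $t=r$ then forces $c_{4r+2} = c$, closing the walk.

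The subtle part, which I would spend most of my effort on, is to prove that the $4(2r+1)$ positions visited by the walk are pairwise distinct, so that the walk is a genuine cycle and not merely a closed walk that crosses itself. Any coincidence $y_i = y_j$ or $x_i = x_j$ with $i \ne j$ can be translated, via the commutation relations and the explicit expressions for the $c_i$'s, into an equality of the form $(\sigma_1\sigma_3\cdots\sigma_{2t-1})(c) = (\sigma_2\sigma_4\cdots\sigma_{2t})(c)$ for some $1 \le t < r$. The non-equality hypotheses at these values of $t$ rule out each such premature coincidence, so the only identification that survives is the intended full closure at $i = 4r+2$. Carrying out this case analysis carefully, indexed by the residue of $i$ modulo $2r+1$, is the main obstacle of the proof.
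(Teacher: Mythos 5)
First, a framing remark: the paper states this theorem without proof, so the only internal benchmark is the proof of Theorem~\ref{12c}; your corner sequence $(i \bmod 2,\; i \bmod (2r+1))$ is indeed the natural ``two sweeps through the block columns'' generalization of the path used there, and your bookkeeping of the recursion (identity updates on the top row, factors $\sigma_{l_{i+1}}^{-1}\sigma_{l_i}$ on the bottom row, with $\sigma_0:=\mathrm{id}$ and $2r+1$ nontrivial factors) is correct. The gap lies in the role you assign to the hypotheses. Along your walk the nontrivial factors are, in order, $\sigma_1^{-1},\ \sigma_3^{-1}\sigma_2,\ \dots,\ \sigma_{2r-1}^{-1}\sigma_{2r-2},\ \sigma_{2r},\ \sigma_2^{-1}\sigma_1,\ \dots,\ \sigma_{2r}^{-1}\sigma_{2r-1}$. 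The moment you permit yourself to reorder these (which you must in order to bring $P=\sigma_1\sigma_3\cdots\sigma_{2r-1}$ and $Q=\sigma_2\sigma_4\cdots\sigma_{2r}$ into play), the full composition telescopes to the identity, so the closure $c_{4r+2}=c_0$ holds automatically, exactly as in Theorem~\ref{12c}; the equality $P(c)=Q(c)$ is never needed for closure. Where that equality does enter is distinctness, and with the opposite sign to what you claim: the halfway corner $i=2r+1$ lies in block $(1,0)$, the same block as the position visited immediately after the start (internal column $c_0$), and its internal column is $c_{2r+1}=\sigma_{2r}\sigma_{2r-1}^{-1}\cdots\sigma_2\sigma_1^{-1}(c_0)$, i.e.\ $QP^{-1}(c_0)$ after your reordering. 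Avoiding a collision there requires $P$ and $Q$ to \emph{differ} at $P^{-1}(c_0)$; if you start the walk at $c_0=P(c)$ (the natural way to make the hypotheses, which are all evaluations at the single point $c$, bear on the walk), the hypothesized equality $P(c)=Q(c)$ forces precisely this collision, and the walk degenerates instead of giving a $4(2r+1)$-cycle. If you start anywhere else, the hypotheses say nothing about the $2r+1$ distinctness conditions you actually need (one per block column, each an inequality involving the \emph{full} products $P,Q$ and single letters at points wandering along the walk), whereas the statement only supplies $r-1$ inequalities about \emph{partial} products at the one point $c$. So the case analysis you defer is not merely laborious; as set up, it cannot be completed from the stated assumptions, and at block column $0$ it fails outright.

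Two further warning signs confirm the mismatch. Your argument never uses the assumption that $r$ is odd, which the statement insists on; and commutation at the single point $c$ does not license reordering products evaluated at the shifted points $c_i$ (Theorem~\ref{12c} assumes $\sigma_1\sigma_2=\sigma_2\sigma_1$ as permutations, not at one point), so even your closure computation needs commutation along the whole traversed orbit. Both suggest the intended construction is not the two-sweep walk: note that for any closed walk confined to these two block rows in which every block column is visited exactly twice, each $\sigma_i$ enters the (reordered) closure map with even exponent, so its closure condition can never literally be $P(c)=Q(c)$, where each $\sigma_i$ occurs once. To prove the statement as written you would need either a different walk, one whose closure condition genuinely is an equality of $P$- and $Q$-type products with the $t<r$ inequalities excluding earlier returns (this forces the nonzero block columns to be visited an odd number of times each), or else a corrected set of hypotheses; the present proposal does neither.
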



 \section{Relevant additive number theoretic concepts}\label{conc}
 
 We begin by introducing some concepts of additive number theory which will be shown later in this paper to be relevant to the theory of LDPC codes.
 
 \begin{definition}
For a set $\mathcal{B}$, we define the set $\mathcal{B}_{\Delta}$ of differences and the sum set $\mathcal{B}_+$ as follows \[\mathcal{B}_\Delta=\{a-b \mid a, b \in \mathcal{B}, a\neq b\}\]
\[\mathcal{B}_+ = \{a+b \mid a,b \in \mathcal{B}. \}\]
\end{definition}

\begin{definition}
To a set $\mathcal{B}$, we associate the following set of sequences \[\mathcal{B}_{S}^t=\{(a_0, a_1\ldots, a_{t}) \mid a_i\neq a_{i+1}, 0\leq i< t, \ a_{t}\neq a_0\}\subseteq \mathcal{B}^{t}\]
 To any sequence $B=(a_0, a_1\ldots, a_{t})\in \mathcal{B}_{S}^{t}$,  we associate the sequence \begin{align*}
     B_d = (a_0-a_{t}, a_1-a_0, \ldots, a_{t}-a_{t-1}).
 \end{align*} We also define the set $\mathcal{B}_{D}^{t}$ of all sequences $B_d$.
 \begin{align*}
\mathcal{B}_{d}^t=\{(a_0-a_t, a_1-a_0\ldots, a_{t}-a_{t-1}) \mid a_i \in \mathcal{B}, a_i\neq a_{i+1}, 0\leq i< t, \ a_{t}\neq a_0\}\subseteq (\mathcal{B}_\Delta)_S^{t}.
 \end{align*}
\end{definition}

 We now introduce a number theoretic concept originally defined in the work of Sidon on Fourier series \cite{Sidon1932EinS}. Sidon defined a ${B}_2$ sequence as a sequence $a_1 < a_2 < \ldots$ of positive integers such that the sums $a_i+a_j, \ i<j$ are all different. In accordance with this definition, in \cite{jia} and \cite{chen}, a Sidon $B_h$-sequence was defined as a set of positive integers such that all the sums $a_1+a_2+\ldots + a_h$, $a_i \in {B}_h$, are distinct, up to rearrangement of the summands. This concept was applied to sets modulo $n$ rather than over $\Z$ in \cite{chen}, where the authors called it a $B_h$ sequence for $\Zn$. Problems related to a distinct subset sums set over integers, i.e. a set $ S \subseteq {\mathbb Z} $ whose distinct subsets have distinct sums, have also been studied, for example, in \cite{erdos}, \cite{Bohman} and \cite{guy}. In this paper, we are mostly interested in $B_2$-sequences, and all arithmetic considered will be modulo $m$.

  \begin{definition}
A set $\mathcal{I} \subseteq \Z_m$ is called a Sidon $m$-$B_t$ set (or, in short, an $m$-$B_t$ set) if for $i_1, i_2, \ldots, i_t, j_1, \ldots j_t \in \mathcal{I}$, we have $i_1+i_2 +\ldots + i_t = j_1+j_2+\ldots + j_t \iff  \{i_1, i_2, \ldots, i_t \}= \{j_1, j_2, \ldots, j_t \}$.   Since all arithmetic in this paper is modulo a finite integer $m$, we sometimes omit the prefix $m$ and refer to $B_t$-sets.
 \end{definition}
 
 \begin{lemma}\label{equivconds} The following conditions are equivalent.
 \begin{enumerate}
     \item $\mathcal{I}$ is a $B_2$-set
     \item $|\mathcal{I}_+| =  (|\mathcal{I}|^2+|\mathcal{I}|)/2$
     \item $0 \not\in (\mathcal{I}_+)_\Delta$.
 \end{enumerate}
 \end{lemma}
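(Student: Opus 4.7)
My plan is to establish (1) $\Leftrightarrow$ (2) via a short counting argument, and (1) $\Leftrightarrow$ (3) by unpacking both conditions as statements about the uniqueness of sum-representations in $\mathcal{I}$.

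Setting $n = |\mathcal{I}|$, I would introduce the natural map $\phi$ sending an unordered pair $\{a,b\}$ with $a, b \in \mathcal{I}$ (with the case $a = b$ allowed) to the value $a + b \in \mathcal{I}_+$. The domain of $\phi$ has cardinality $\binom{n}{2} + n = (n^2 + n)/2$, counting the $\binom{n}{2}$ off-diagonal pairs plus the $n$ diagonal pairs with $a = b$. The map $\phi$ is surjective by the definition of $\mathcal{I}_+$, so one always has $|\mathcal{I}_+| \le (n^2 + n)/2$, with equality precisely when $\phi$ is injective. But injectivity of $\phi$ is exactly the statement that no two distinct unordered pairs drawn from $\mathcal{I}$ share a sum, i.e.\ the $B_2$-condition itself. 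This gives (1) $\Leftrightarrow$ (2).

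For (1) $\Leftrightarrow$ (3), I would read $0 \in (\mathcal{I}_+)_\Delta$ as the assertion that some element of $\mathcal{I}_+$ admits two distinct unordered representations as a sum of two elements of $\mathcal{I}$, so that the difference between two such representations of the same sum is $0$. Failure of (1) immediately produces $\{i_1,i_2\} \neq \{j_1,j_2\}$ with $i_1 + i_2 = j_1 + j_2$, which is exactly such a collision; conversely, any such collision violates the $B_2$-property by definition. Hence (1) $\Leftrightarrow$ (3), and combining with the previous step closes the cycle.

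The main obstacle is not mathematical but notational: one must carefully read $0 \in (\mathcal{I}_+)_\Delta$ in the sense of a multiset of sums indexed by unordered pairs $\{a,b\} \subseteq \mathcal{I}$, rather than as the literal difference set of $\mathcal{I}_+$ viewed purely as a subset of $\Z_m$ (which, under the given definition of $\mathcal{B}_\Delta$ with $a \neq b$, would vacuously exclude $0$). Once the intended reading is fixed, the whole argument reduces to the bijection criterion for the pairing map $\phi$ above, and the $B_2$-condition is itself precisely the injectivity of $\phi$, making both equivalences essentially tautological after the counting step.
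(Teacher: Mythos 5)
Your proof is correct. Note that the paper states Lemma~\ref{equivconds} without any proof, so there is no argument of the authors to compare against; your counting argument via the pairing map $\phi$ (equality $|\mathcal{I}_+|=(n^2+n)/2$ holds iff $\phi$ is injective iff the $B_2$-condition holds) together with the direct unpacking of condition (3) is the natural way to fill this gap. You are also right to flag the notational point: with $\mathcal{B}_\Delta$ as literally defined (differences of \emph{distinct elements} of $\mathcal{B}$), the condition $0\not\in(\mathcal{I}_+)_\Delta$ would be vacuously true, so (3) must be read at the level of (unordered, repetition-allowed) pairs of summands rather than of the set $\mathcal{I}_+$ itself; this multiset reading is exactly how the paper later uses the notation in Theorem~\ref{injadd} and Corollary~\ref{8cyc-prime}, where $i=(i_2+i_4)-(i_1+i_3)$ with $\{i_2,i_4\}\neq\{i_1,i_3\}$ is declared to lie in $(\mathcal{I}_+)_\Delta$. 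With that reading fixed, your two equivalences are complete and correct, including the edge case $a+a=b+b$ with $a\neq b$ (possible for even $m$), which your formulation handles since it counts diagonal pairs as well.
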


 \begin{remark}
$B_t$-sets are in close relation to what is known as difference set in the literature. 
More precisely, an $m$-$B_2$-set $\mathcal{I}$ with $|\mathcal{I}|=k$ is equal to a $(m,k,1)$ difference set (see Definition 7.19 in \cite{lidl1997finite})

 \end{remark}

 \begin{lemma}\cite{unifying}
Set $i_1=0$ and $i_l=1+2i_{l-1}$. Then, one obtains an $m$-$B_2$ set for any odd $m$.
\end{lemma}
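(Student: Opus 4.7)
The plan is to solve the recurrence and then reduce the $B_2$ property to uniqueness of binary representations. A simple induction gives $i_l = 2^{l-1} - 1$ for all $l \geq 1$: the base case is $i_1 = 0 = 2^0 - 1$, and the inductive step yields $i_l = 1 + 2(2^{l-2}-1) = 2^{l-1}-1$. Since $m$ is odd, $2$ is a unit in $\mathbb{Z}_m$, so the map $\phi(x) = 1 + 2x$ is a bijection on $\mathbb{Z}_m$ and the recursion is well-defined there.

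By the equivalent formulation in Lemma~\ref{equivconds}, it suffices to show that all pairwise sums $i_a + i_b$ with $a \leq b$ are distinct. Suppose $i_a + i_b = i_c + i_d$ with $a \leq b$ and $c \leq d$; substituting the closed form and canceling the constant term gives
\[
2^{a-1} + 2^{b-1} = 2^{c-1} + 2^{d-1}.
\]
Splitting into cases: if $a < b$, the left side has exactly two $1$s in its binary expansion, at positions $a-1$ and $b-1$, so uniqueness of binary representation forces $\{a,b\} = \{c,d\}$; if $a = b$, the left side equals $2^a$, with a single $1$ in its expansion, which forces $c = d$ and then $c = a$. Either way $\{i_a, i_b\} = \{i_c, i_d\}$, so the set $\{i_1, i_2, \ldots\}$ is a $B_2$ set over $\mathbb{Z}$.

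To transfer this to $\mathbb{Z}_m$, the key observation is that any nonzero difference $(i_a + i_b) - (i_c + i_d)$ in the initial segment $\{i_1, \ldots, i_k\}$ is bounded in absolute value by $2 i_k = 2^k - 2$, so as soon as $2^k - 1 \leq m$, a sum congruence modulo $m$ lifts to the integer equality handled above, and the $B_2$ property is preserved. The main obstacle I expect is interpretive rather than technical: the phrase \emph{any odd $m$} must be read as allowing a compatible truncation of the sequence, since the full orbit of $0$ under $\phi$ modulo $m$ need not itself be a $B_2$ set (for instance the orbit modulo $9$ is $\{0,1,3,7,6,4\}$, and $0+4 \equiv 1+3 \pmod 9$). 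Once the truncation $k \leq \lfloor \log_2(m+1) \rfloor$ is fixed, the binary-representation argument carries over directly and produces the advertised $m$-$B_2$ set.
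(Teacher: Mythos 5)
Your algebra is sound, and it is worth saying up front that the paper itself offers no proof of this lemma to compare against: it is imported verbatim from \cite{unifying}. On its own terms your argument is correct: the closed form $i_l=2^{l-1}-1$, the binary-representation argument showing that $\{2^{l-1}-1 \mid l\geq 1\}$ is a $B_2$ set over $\Z$ (including the repeated-summand case $a=b$, which the paper's definition requires), and the reduction step (a nonzero difference of two pairwise sums from $\{i_1,\ldots,i_k\}$ has absolute value at most $2^k-2$, hence cannot be divisible by $m$ once $2^k-1\leq m$) are all valid, and your appeal to Lemma~\ref{equivconds} is just the definition restated.

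The substantive point is the one you raise yourself, and it should be stated more sharply than as an ``interpretive obstacle.'' As transcribed in this paper, with no restriction on the number of terms, the statement is simply false: your orbit computation modulo $9$ is correct ($0+4\equiv 1+3$), and already $\{0,1,3\}$ fails modulo $5$ since $0+1\equiv 3+3$. So oddness of $m$ cannot by itself be the operative hypothesis. But notice that your repair --- truncating at $k\leq\lfloor\log_2(m+1)\rfloor$ --- makes the hypothesis ``$m$ odd'' entirely vacuous: the same lifting argument works for every $m$ larger than the largest pairwise sum, even or odd, and it only ever produces sets of logarithmic size in $m$. That is a strong indication that your reading, while mathematically safe, does not recover the intended content of the cited lemma; presumably the source couples the number of terms to $m$ (or requires $m$ to exceed the largest sum), with oddness serving a different purpose, namely making $x\mapsto 2x+1$ (equivalently, multiplication by $2$) invertible modulo $m$ so that the recursion behaves well. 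A faithful proof must state and use such a size hypothesis explicitly rather than leave it as an afterthought; once it is in place, your binary-expansion mechanism is exactly the right one, and your counterexamples are a useful warning that the lemma as stated here needs that hypothesis added.
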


\section{Code Constructions: General Formulation}\label{construction}

In this section, we present a method to construct LDPC codes whose cycles are directly related to the concepts introduced in Section \ref{conc}. We denote by $\mathcal{S}_n$ the symmetric group on $n$ symbols. We will construct parity check matrices from a single permutation $f\in \mathcal{S}_n$.  Let $f: \Z_n  \rightarrow \Z_n$ be a permutation and let  $m$ be its order in the group $\mathcal{S}_n$. For compatibility of the notations of symbols in $\Z_n$ and $\mathcal{S}_n$, we use the symbols $0$ and $n$ interchangeably. For example, the permutation $(n-1, n) \in \mathcal{S}_n$ represents the map $n-1 \mapsto 0$, $0 \mapsto n-1$, $i \mapsto i \ \forall \ i\in \Z_n, i\neq 0, \; n-1$. Note that if $f$ is an $n$-cycle then $m=n$. If $n$ is prime and $f$ is an $n$-cycle, then $m=n$ and we say that $f$ is a prime $m$-cycle. A permutation $f$ is said to be a derangement if it has no fixed points, i.e. if $f(c)\neq c \ \forall \ c\in \Z_n$. 
 
 The following two results are elementary. 
 \begin{lemma} Let $f \in \mathcal{S}_n$. If $c_0 \in \Z_n$ is a fixed point of $f$ then it is a fixed point of $f^i$ for every $ i\geq 1$. 
 \end{lemma}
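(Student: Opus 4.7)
The plan is to prove this by a straightforward induction on $i \geq 1$. For the base case $i=1$, the conclusion $f(c_0) = c_0$ is exactly the hypothesis that $c_0$ is a fixed point of $f$. For the inductive step, I would assume that $f^i(c_0) = c_0$ and compute
\[
f^{i+1}(c_0) \;=\; f\bigl(f^i(c_0)\bigr) \;=\; f(c_0) \;=\; c_0,
\]
where the first equality is the definition of the $(i+1)$-fold composition $f^{i+1} = f \circ f^i$, the second uses the inductive hypothesis, and the third uses the base assumption.

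There is essentially no obstacle here: the statement is an immediate consequence of the definition of function composition and the associativity of composition in $\mathcal{S}_n$. The only point to make explicit is the convention that $f^{i+1} = f \circ f^i$ (equivalently $f^{i+1}(x) = f(f^i(x))$ for all $x \in \Z_n$), so that the inductive step is verified by a single substitution. No number-theoretic or combinatorial input is required, and the result will later be invoked in the paper to conclude that a power of $f$ that fixes some element $c_0$ cannot arise from $f$ acting as a derangement.
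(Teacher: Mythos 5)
Your induction is correct and complete; the paper itself omits any proof, labelling the lemma as elementary, and your argument is exactly the standard one intended there.
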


 \begin{lemma} Let $f \in \mathcal{S}_n$ have order $m$. If $f$ is a derangement then $f^{i}$ is a derangement for $i$ coprime to $m$. 
 \end{lemma}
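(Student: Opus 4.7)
The plan is to argue by contradiction: suppose $c \in \Z_n$ is a fixed point of $f^i$, i.e.\ $f^i(c)=c$, and derive that $c$ would also be a fixed point of $f$ itself, contradicting the derangement hypothesis on $f$.

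The key observation is that when $\gcd(i,m)=1$, the permutation $f$ lies in the cyclic subgroup generated by $f^i$. Indeed, Bezout's identity supplies integers $a,b$ with $ai+bm=1$, and since $m$ is the order of $f$ we have $f^m=\Id$, so
$$f = f^{ai+bm} = (f^i)^a (f^m)^b = (f^i)^a.$$
Having written $f$ as a power of $f^i$, the preceding lemma applies, using $f^i$ in place of $f$ there: a fixed point of any permutation is a fixed point of each of its powers. Therefore the hypothesized fixed point $c$ of $f^i$ is automatically a fixed point of $(f^i)^a = f$, contradicting that $f$ is a derangement.

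There is no real obstacle in this argument; the statement is an elementary consequence of the cyclic structure of $\langle f \rangle$ together with Bezout's identification of $f$ as a power of $f^i$. The only small point that must be handled with care is the use of $f^m=\Id$ when simplifying the Bezout expression, which is exactly the content of $f$ having order $m$.
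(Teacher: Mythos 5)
Your argument is correct: the paper states this lemma without proof (labelling it elementary), and your Bezout argument --- writing $f=(f^i)^a$ from $ai+bm=1$ and $f^m=\Id$, then invoking the preceding lemma that a fixed point of a permutation is fixed by all its powers --- is precisely the intended elementary justification. The only cosmetic point is that the Bezout coefficient $a$ may be negative, which you can sidestep by replacing $a$ with $a \bmod m$ (a nonnegative exponent representing the same power of $f^i$), after which the preceding lemma applies verbatim.
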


We now describe a construction of an LDPC code using different powers of $f$. We consider parity check matrices of the form
 $$H=\begin{pmatrix}1 & 1 &  \ldots & 1 \\ 
1 & f^{A_1i_1} &  \ldots & f^{A_1i_r}   \\ 
1 & f^{A_2i_1} &  \ldots & f^{A_2i_r}   \\ 
\vdots & \vdots &  \ddots & \vdots \\
1 & f^{A_Ji_1} &  \ldots & f^{A_Ji_r}   \\ 
\end{pmatrix}.$$

Each of the numbers $A_k$ and $i_k$ is considered modulo $m$, since they appear as exponents of $f$. We assume that the set of numbers $A_k \in \Z_m$ are nonzero and pairwise different (mod $m$). Similarly, the numbers $i_k \in \Z_m$ are assumed to be nonzero and pairwise different. 

 By Theorem 4, $H$ has a maximum girth of 12. Write $$\mathcal{I} = \{0, i_1, i_2, \ldots, i_r\} \subseteq \Z_m, \ i_1 < i_2< \ldots < i_r,$$  $$\mathcal{A}=\{0, A_1, A_2, \ldots A_J\}\subseteq \Z_m.$$

Note that any $2t$-cycle has a trajectory that can be represented as 
\begin{align*}
    a_{t}j_1\rightarrow  {a_1j_1}\rightarrow a_1j_2\rightarrow {a_2j_2} \rightarrow  a_2j_3\rightarrow  \ldots  \rightarrow {a_i j_{i+1}}\rightarrow a_ij_{i+2} \rightarrow \ldots  a_{t}j_{t} \rightarrow a_{t}j_1
\end{align*}
 where $a_k \in \mathcal{A}$, $j_k \in \mathcal{I}$, $1\leq k\leq t$, $a_i \neq a_{i+1},\ j_i \neq j_{i+1}, \ 1\leq i\leq t-1$,  $a_{t} \neq a_1$, $j_t \neq j_1$.
Denote by $c_{2t}$ the column index of the $(2t)^{th}$ term of the $2t$-cycle. Then, we have $c_{2t}=f^{i} (c_0)$ where \begin{align*}
    i & =(j_{t}-j_1)a_{t} + (j_1-j_2)a_1+\ldots + (j_{i}-j_{i+1})a_{i} + \ldots +  (j_{t-1}-j_{t})a_{t-1} \\
    & = (a_1-a_{t})j_1 + (a_2-a_1)j_2+\ldots +  (a_{i+1}-a_{i})j_{i}  + \ldots + (a_{t}-a_{t-1})j_{t} 
\end{align*}

We revisit 12-cycles and demonstrate Theorem~\ref{12c} in this setting. For 12-cycles, $$i = (a_0-a_5)j_1 + (a_1-a_0)j_2+\ldots + (a_5-a_4)j_6.$$ Set $j_1=j_4= 0$, $j_2=j_5 = j$, $j_3=j_6 = k$, $a_0=a_2=a_4$, $a_1=a_3=a_5$. Then $i = (a_1-a_0)(j-k) + (a_0-a_1)(-k+j)=0$. Thus, $H$ always has a 12-cycle. 

The following lemma describes a relationship between $2k$-cycles in $H$, and the sets $\mathcal{A}_D^k$, $\mathcal{I}_S^k$, $\mathcal{I}_D^k$, and $\mathcal{I}_S^k$. 
\begin{lemma} The following statements are equivalent.
\begin{enumerate}
\item $H$ has a $2k$-cycle.
    \item There exists a sequence $A_d =(b_1, \ldots, b_{k}) \in \mathcal{A}_{D}^{k}$, a sequence $J_s=(j_1\ldots, j_{k})\in \mathcal{I}_{S}^{k} $, and a point $c_0 \in \Z_n$ such that for all $k'<k$, $f^{b_1j_1 + \ldots + b_{k'} j_{k'}}(c_0)\neq c_0$, and $f^{b_1j_1 + \ldots + b_{k} j_{k}}(c_0)=c_0$. 
    
    \item There exists a sequence $A_S =(b_1, \ldots, b_{k}) \in \mathcal{A}_{S}^{k}$, a sequence $J_d=(j_1\ldots, j_{k})\in \mathcal{I}_{d}^{k} $, and a point $c_0 \in \Z_n$ such that for all $k'<k$, $f^{b_1j_1 + \ldots + b_{k'} j_{k'}}(c_0)\neq c_0$, and $f^{b_1j_1 + \ldots + b_{k} j_{k}}(c_0)=c_0$. 
\end{enumerate}
\end{lemma}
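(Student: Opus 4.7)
The plan is to exploit the trajectory parametrization of $2k$-cycles developed in the paragraph immediately preceding the lemma. There it is shown that a $2k$-cycle in $H$ corresponds to a choice of sequences $(a_1,\ldots,a_k)$ in $\mathcal{A}$ and $(j_1,\ldots,j_k)$ in $\mathcal{I}$, with consecutive entries distinct cyclically, together with a starting column index $c_0 \in \Z_n$, such that $f^i(c_0)=c_0$ for the closing exponent $i$ which admits the two equivalent expressions
\begin{align*}
i &= (a_1-a_k)j_1 + (a_2-a_1)j_2 + \cdots + (a_k-a_{k-1})j_k \\
  &= (j_k-j_1)a_k + (j_1-j_2)a_1 + \cdots + (j_{k-1}-j_k)a_{k-1}.
\end{align*}
The first step is to recall this parametrization and observe that the intermediate column indices reached as the cycle is traced are precisely the partial sums of the terms in these expressions, applied to $c_0$.

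For the equivalence (1)$\Leftrightarrow$(2) I would set $b_\ell := a_\ell - a_{\ell-1}$ (with the convention $a_0 := a_k$) in the first expression, so that $i = b_1 j_1 + \cdots + b_k j_k$. Each $b_\ell$ lies in $\mathcal{A}_\Delta$ because $a_\ell \neq a_{\ell-1}$, placing $(b_1,\ldots,b_k)$ in $\mathcal{A}_D^k$, while $(j_1,\ldots,j_k)$ lies in $\mathcal{I}_S^k$ by the trajectory conditions. The conditions $f^{b_1j_1+\cdots+b_{k'}j_{k'}}(c_0)\neq c_0$ for $k'<k$ then record exactly the fact that the partial trajectory does not already close up into a shorter closed subpath. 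The reverse direction would follow by telescoping the $b_\ell$ back into an $a$-sequence and reading off the cycle directly. The equivalence (1)$\Leftrightarrow$(3) is proved by the same manipulation applied to the second expression, keeping $(a_1,\ldots,a_k) \in \mathcal{A}_S^k$ as the free sequence and reading off the $j$-differences as an element of $\mathcal{I}_d^k$.

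The subtlest point, and the one I expect to require the most care, is matching the non-premature-return condition in (2) and (3) with the definitional requirement that a $2k$-cycle have all positions distinct (apart from identifying the first with the last). Under the block-permutation structure, a step in the trajectory is identified by its block indices $(a_\ell,j_\ell)$ together with the column index, so a collision of positions at two intermediate steps would force a pair of steps sharing the same block indices and column. Translating this into the algebraic language of partial exponential sums, and verifying it is equivalent to the stated premature-return condition (possibly after a cyclic shift of the starting index so that any premature return to an \emph{arbitrary} previously visited column becomes a premature return to $c_0$), is the one place where careful but essentially routine bookkeeping will be needed.
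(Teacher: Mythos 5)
Your argument follows exactly the route the paper intends: the lemma is stated there without a separate proof, as an immediate consequence of the trajectory parametrization and the two equivalent expressions for the closing exponent $i$ derived in the paragraph just before it, and your substitution $b_\ell = a_\ell - a_{\ell-1}$ (respectively reading off the $j$-differences for part (3)) is precisely that argument made explicit. The bookkeeping point you flag, matching the no-premature-return condition on partial sums with the requirement that all positions of the cycle be distinct, is a real subtlety, but the paper does not address it either, so your sketch is at least as complete as the source.
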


\begin{remark}
Note that the roles of $\mathcal{I}$ and $\mathcal{A}$ are symmetric in the consideration of cycles. This is to be expected, since in general taking the transpose of the parity check matrix does not alter the cycles. Thus, we will often not specify conditions for both the symmetric cases. 
\end{remark}

Note that in the case when $f$ is a prime $m$-cycle, $f^i$ has a fixed point if and only if $i=0 \mod m$. Thus, we have

\begin{lemma} If $f$ is an $m$-cycle and $m$ is prime, then $H$ has no $2k$-cycle if and only if for any sequences $A_d =(b_1, \ldots, b_{k}) \in \mathcal{A}_{D}^{k}$, $J_s=(j_1\ldots, j_{k})\in \mathcal{I}_{S}^{k} $, we have ${b_1j_1 + \ldots + b_{k} j_{k}}\neq 0$. 
\end{lemma}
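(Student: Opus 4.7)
The plan is to apply the preceding lemma's characterization of $2k$-cycles and use the special structure of prime $m$-cycles to eliminate the existential quantifier over $c_0$.

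The key arithmetic ingredient is the following elementary fact: if $f\in\mathcal{S}_n$ is an $m$-cycle with $m$ prime, then for every exponent $i\not\equiv 0\pmod{m}$, primality forces $\gcd(i,m)=1$, so $f^i$ generates the same cyclic subgroup $\langle f\rangle$ and is therefore itself an $m$-cycle, hence a derangement. Consequently, $f^e(c_0)=c_0$ for some (equivalently, every) $c_0\in\Z_n$ if and only if $e\equiv 0\pmod{m}$. In particular, each fixed-point condition appearing in the preceding lemma collapses to a numerical condition on the exponent alone, independent of $c_0$, so the quantifier $\exists c_0\in\Z_n$ becomes trivial.

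Feeding this into the preceding lemma yields: $H$ contains a $2k$-cycle if and only if there exist $(b_1,\ldots,b_k)\in\mathcal{A}_D^k$ and $(j_1,\ldots,j_k)\in\mathcal{I}_S^k$ such that every proper partial sum $b_1j_1+\cdots+b_{k'}j_{k'}$ with $k'<k$ is nonzero mod $m$ while the total sum $b_1j_1+\cdots+b_kj_k$ is zero mod $m$. Taking contrapositives recovers the target statement almost verbatim, except that the target drops the proper-partial-sum conditions. I expect this to be the only delicate step: a sequence whose total sum vanishes but whose proper partial sum vanishes at some $k'<k$ corresponds to a closed walk of length $2k$ that self-intersects and therefore decomposes into strictly shorter closed pieces. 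The cleanest way to handle this is to read the lemma in its natural inductive setting, where one has already excluded cycles of lengths below $2k$, so no proper partial sum can vanish and the condition on the total sum alone governs the presence of a $2k$-cycle.
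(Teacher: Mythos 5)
Your proposal matches the paper's own (implicit) argument: the paper states this lemma with no separate proof, deriving it directly from the preceding equivalence lemma together with the observation that for a prime $m$-cycle $f$, the power $f^{i}$ has a fixed point if and only if $i \equiv 0 \pmod{m}$, which is exactly your collapse of the quantifier over $c_0$. Your extra discussion of the dropped proper-partial-sum conditions flags a subtlety the paper silently glosses over, and your suggestion to read the statement in the inductive setting where shorter cycles are already excluded is the correct way to reconcile the literal statement with the preceding lemma.
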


We already know that every parity check matrix $H$ of the above form contains a 12-cycle. Thus, to determine the girth of $H$, we find conditions for 4-, 6-, and 8-cycles. The next section will focus on deriving these conditions in terms of properties of the sets $\mathcal{A}$ and $\mathcal{I}$.

\section{Conditions for avoiding $4$-, $6$-, and $8$-cycles}\label{cycles}

This section is divided into two cases, based on the number of block-rows in $H$, namely the cases $L=2$ and $L>2$. In the former case, the set $\mathcal{A}$ can be taken to be equal to $\{0,1\}$ without loss of generality. In this case, we obtain equivalent conditions for $H $ to be free from $4$- and $8$-cycles in terms of conditions on the set $\mathcal{I}$. In the latter case, where $L>2$, we obtain an equivalent condition for $H$ to be free from 4- and 6-cycles in terms of conditions on the sets $\mathcal{A}$ and $\mathcal{I}$, and give a necessary condition for the 8-cycle case.

\subsection{Case $L=2$}

We observe that for $L=2$, all $(4k+2)$-cycles are absent. To see this, note that in this case, any such cycle has an associated sequence of the form $$(j_0,l_0), (j_1, l_0), (j_1, l_1), (j_2, l_1)\ldots, (j_{2k+1}, l_{2k}) $$ where $j_{2i}=j_0, j_{2i-1}=j_1$ for all $i$, where $j_1\neq j_0$. Thus, $j_{2k+1} \neq j_0$, and there is no cycle of length $4k+2$.

We first illustrate the trivial subcase $J=2$, which corresponds to very small-dimensional codes, and so is of no interest in coding theory. Moreover, this case is not really practical since in this case, $H$ has both column and row weight equal to $2$ and the same is true for any submatrix of $H$ corresponding to a cycle. This implies that any cycle of length $4k$ corresponds to a codeword of weight $2k$ and therefore the minimum distance is upper bounded by $2k$. However, we want to start with this case for illustrative purposes.

Note that in this case, the proof of Theorem~\ref{12c} does not work, and in general a 12-cycle may not exist. Here, a general condition can be given for any $4k$-cycle. 

\begin{theorem}
Let $f: \Z_n \rightarrow \Z_n$ be a permutation and consider the parity check matrix given by 

$H =\begin{pmatrix}1 & 1 \\ 1 & f 
\end{pmatrix}$. $H$ has no $4k$-cycles if and only if $f^{k}$ is a derangement. In particular, let $f$ be a derangement and $r>2$ be the smallest prime divisor of the order of $f$. Then, $g(H) \geq 4r$. Thus if $f$ is a derangement with odd order, then $g(H)\geq 12$. If $f$ has prime order $m$, then $g(H)=4m$.
\end{theorem}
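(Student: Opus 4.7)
The plan is to leverage the especially simple combinatorial structure of the $J=L=2$ case. Each row of $H$ contains exactly two ones (one from each of the two blocks), and the same is true of each column, so the Tanner graph associated with $H$ is $2$-regular bipartite. In particular, it decomposes into a disjoint union of even simple cycles, so determining the girth reduces to identifying the lengths of these components.

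My first step is to fix an arbitrary starting check node corresponding to some index $c_0\in\Z_n$ in the top row-block and to trace the unique closed walk through the Tanner graph. Because three of the four blocks are the identity, only the block equal to $f$ ever shifts the coordinate; a short bookkeeping calculation will show that one complete tour through the four blocks $(0,0),(1,0),(1,1),(0,1)$ (four edges in all) carries the label $c$ to $f^{-1}(c)$. Iterating, a walk of length $4k$ starting at the check node labelled $c_0$ returns to its starting vertex if and only if $f^{-k}(c_0)=c_0$, equivalently $f^{k}(c_0)=c_0$. Taking the union over all starting coordinates then gives the first claim of the theorem: $H$ admits a $4k$-cycle if and only if $f^k$ has a fixed point, that is, iff $f^k$ is not a derangement. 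Equivalently, $H$ is free of $4k$-cycles iff $f^k$ is a derangement.

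The three ``in particular'' assertions follow by applying elementary divisibility to the orbit structure of $f$. If $f$ is a derangement of order $m$ with smallest prime divisor $r>2$, then $m$ is odd, every orbit length of $f$ divides $m$, and every such length is at least $2$; consequently every orbit length is an odd divisor of $m$ and hence at least $r$. So for every $k<r$ no orbit length divides $k$, meaning $f^k$ is still a derangement, which forces $g(H)\geq 4r$. The odd-order case drops out by taking $r\geq 3$, yielding $g(H)\geq 12$. When $m$ is prime and $f$ is a derangement, every orbit length divides the prime $m$ and exceeds $1$, so each orbit has length exactly $m$; then $f^k$ is a derangement for $1\leq k<m$ while $f^m=\mathrm{id}$ is not, which pins down $g(H)=4m$.

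The main obstacle is really the bookkeeping in the block trace: one must verify that at every vertex the next edge is forced (the Tanner graph has degree $2$ everywhere, so there is no choice of which neighbour to visit), and then carefully confirm that the composition of moves through the four blocks evaluates to $f^{-1}$ rather than some less transparent permutation. Once this core computation is in hand, the remaining parts of the theorem are straightforward consequences of the orbit structure of $f$.
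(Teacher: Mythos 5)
Your proposal is correct and follows essentially the same route as the paper: the paper applies the Fossorier criterion (its Theorem 1) to track the column index, obtaining $c_{i+1}=f^{-1}(c_i)$ for $i$ even and $c_{i+1}=c_i$ for $i$ odd, hence $c_{2k}=f^{-k}(c_0)$, which is exactly your computation that one tour through the four blocks applies $f^{-1}$, after which a $4k$-cycle exists iff $f^k$ has a fixed point. Your $2$-regular Tanner-graph framing and the explicit orbit-length arguments for the ``in particular'' girth statements are just a slightly more detailed packaging of the same argument.
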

\begin{proof}
Let $k\geq 1$. By Theorem 1 $H$ has a $2k$-cycle
if and only if there exists a column index $c_0$, $0\leq c_0 \leq l_0$, 
such that for $c_{i+1} = (R_{j_{i+1}, l_{i+1}}^{-1}(R_{j_{i+1}, l_i}(c_{i}))$, $1 \leq i \leq k-1$, we have $c_0 = c_{2k}$. It is easy to see that here, $c_{i+1} = f^{-1}(c_i)$ if $i$ is even, and $c_{i+1} = c_i$ if $i$ is odd. 
By induction, we have $c_{2k} = f^{-k}(c_0)$. Thus, a cycle of length $4k$ exists if and only if for some $c_0$ we have $f^{-k}(c_0) =c_0$, i.e. if and only if $f^{k}(c_0)$ has a fixed point. 
\end{proof}



 
 We now turn to the more interesting subcase, where $L=2$, and $J>2$. Here, the parity check matrix can be written as $H=\begin{pmatrix}1 & 1 &  \ldots & 1 \\ 
1 & f^{i_1} &  \ldots & f^{i_r}   \\ 
\end{pmatrix}$, where $r\geq 2$. As before, write $\mathcal{I} = \{0, i_1, \ldots, i_r\}$.

The following result shows that for a fixed permutation $f$, 4-cycles in $H$ are determined entirely by the set $\mathcal{I}_\Delta$.
     \begin{theorem}\label{4cyc} For $r>1$ the parity check matrix $H=\begin{pmatrix}1 & 1 &  \ldots & 1 \\ 
1 & f^{i_1} &  \ldots & f^{i_r}   \\ 
\end{pmatrix}$
has no $4$-cycle if and only if $f^{t}$ is a derangement for every $t \in \mathcal{I}_\Delta$. 
\end{theorem}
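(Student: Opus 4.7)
The plan is to apply the general cycle criterion (the equivalent statements of the previous lemma, specialized to $k=2$) to this very restricted setting, and then observe that the resulting arithmetic condition is exactly a statement about differences drawn from $\mathcal{I}$.

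First I would fix the structure of any $4$-cycle in $H$. Since there are only two block rows and a cycle must use at least two distinct block rows, both rows are used, so the row labels $a_1,a_2 \in \mathcal{A}=\{0,1\}$ are forced to be $\{0,1\}$. A $4$-cycle is then determined by a choice of two distinct column indices $l_0, l_1$, equivalently by two distinct elements $j_1,j_2 \in \mathcal{I}$ (with the convention $i_0=0 \in \mathcal{I}$). Using the formula for the column-tracking exponent derived in Section \ref{construction}, a $4$-cycle exists along this trajectory if and only if there is a $c_0 \in \Z_n$ with
\[
f^{(a_1-a_2)(j_1-j_2)}(c_0)=c_0,
\]
and since $a_1-a_2 = \pm 1$, this reduces to $f^{j_1-j_2}$ having a fixed point (equivalently, failing to be a derangement). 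This is the direction that does most of the work.

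Next I would translate this into the language of $\mathcal{I}_\Delta$. By definition, $\mathcal{I}_\Delta=\{a-b : a,b\in\mathcal{I},\, a\neq b\}$, so the quantities $j_1-j_2$ arising above are exactly the elements of $\mathcal{I}_\Delta$, and conversely every $t\in \mathcal{I}_\Delta$ is realized as some such difference. Hence the existence of a $4$-cycle is equivalent to the existence of some $t\in \mathcal{I}_\Delta$ for which $f^{t}$ has a fixed point; contrapositively, $H$ has no $4$-cycle if and only if $f^{t}$ is a derangement for every $t\in \mathcal{I}_\Delta$.

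The only subtlety — really a bookkeeping point rather than an obstacle — is handling the sign ambiguity $\pm(j_1-j_2)$: this is harmless because $f^{t}$ is a derangement if and only if $f^{-t}$ is, so the condition is symmetric in $t$ and $-t$, matching the symmetry built into $\mathcal{I}_\Delta$. I would also note that the case $r>1$ is needed only to ensure that $\mathcal{I}_\Delta$ contains a nonzero element other than $\pm i_1$, so the statement has non-trivial content; otherwise the proof is a direct specialization of the general machinery already established, and no new ideas are required.
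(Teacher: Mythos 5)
Your proposal is correct and follows essentially the same route as the paper: both reduce a $4$-cycle to the exponent $(a_1-a_2)(j_1-j_2)=\pm(j_1-j_2)$ via the trajectory formula and identify these differences with $\mathcal{I}_\Delta$, concluding that absence of $4$-cycles is equivalent to $f^{t}$ being a derangement for all $t\in\mathcal{I}_\Delta$. Your explicit remark on the $\pm$ sign symmetry is a small but welcome bookkeeping addition.
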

\begin{proof}
Up to the order of rows, a 4-cycle has a trajectory of the form $1\rightarrow  {j_1}\rightarrow j_2\rightarrow 1 \rightarrow 1$, for $j_1, j_2 \in \mathcal{I}$ so we have $c_4 = f^{(j_1-j_2)}(c_0)$ for some distinct $j_1, j_2 \in \mathcal{I}$. Thus, there is a one-to-one correspondence between 4-cycles in $H$ and elements of $\mathcal{I}_\Delta$, and $H$ has no $4$-cycle if and only if $f^{i}$ is a derangement for every $i \in \mathcal{I}_\Delta$.
\end{proof}

 \begin{corollary}\label{4cyc-prime}
 Suppose that $f$ is a derangement of prime order $m=n$. Then $H$ has no 4-cycle.
 \end{corollary}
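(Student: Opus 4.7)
The plan is to combine Theorem~\ref{4cyc} with the elementary Lemma about derangements stated in Section~\ref{construction}. Theorem~\ref{4cyc} reduces the absence of a $4$-cycle in $H$ to the condition that $f^t$ is a derangement for every $t \in \mathcal{I}_\Delta$, so the whole task is to verify this condition under the hypothesis that $f$ is a derangement of prime order $m=n$.

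First, I would unpack what $\mathcal{I}_\Delta$ looks like. Recall that $\mathcal{I} = \{0, i_1, \dots, i_r\} \subseteq \Z_m$, where the $i_k$'s are nonzero and pairwise distinct modulo $m$. Any element $t \in \mathcal{I}_\Delta$ is a difference $i - j$ with $i,j \in \mathcal{I}$ and $i \neq j$, hence $t \not\equiv 0 \pmod{m}$. Since $m$ is prime, every nonzero residue modulo $m$ is automatically coprime to $m$, so $\gcd(t,m) = 1$ for every $t \in \mathcal{I}_\Delta$.

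Next, I invoke the second of the two elementary lemmas stated just before the construction: if $f \in \mathcal{S}_n$ has order $m$ and is a derangement, then $f^i$ is a derangement for every $i$ coprime to $m$. Applying this with $i = t \in \mathcal{I}_\Delta$ (which is coprime to $m$ by the previous step) yields that $f^t$ is a derangement for every $t \in \mathcal{I}_\Delta$. By Theorem~\ref{4cyc}, $H$ contains no $4$-cycle, completing the proof.

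There is really no obstacle here; the statement is essentially a direct combination of Theorem~\ref{4cyc} with the coprimality-to-$m$ lemma, leveraging the fact that primality of $m$ trivializes the coprimality condition for all nonzero differences. The only thing one must be careful about is noting that the differences are taken in $\Z_m$ (the order of $f$), not in $\Z_n$, but since the $i_k$'s are assumed pairwise distinct modulo $m$ and include $0$, non-equal elements of $\mathcal{I}$ yield nonzero differences in $\Z_m$, as needed.
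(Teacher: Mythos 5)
Your proof is correct and follows exactly the route the paper intends: the corollary is stated as an immediate consequence of Theorem~\ref{4cyc}, obtained by noting that all elements of $\mathcal{I}_\Delta$ are nonzero modulo the prime $m$, hence coprime to $m$, so the elementary lemma on powers of derangements gives that $f^t$ is a derangement for every $t \in \mathcal{I}_\Delta$. Your remark that the differences must be taken modulo $m$ and are nonzero there is a sensible point of care, and nothing further is needed.
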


 
 Thus this technique allows us to construct 4-cycle free $k \times n$ parity-check matrices, as illustrated below. 
 
 \begin{example}
 We take $\mathcal{I}=\{0,1,4,6,12,10,15, 24 \}$, so  \begin{align*} \mathcal{I}_\Delta=\{\pm 1, \pm 2, \pm 3, \pm 4, \pm 5, \pm 6,\pm 8,\pm 9, \pm 10, \pm 11, \pm 12,  \pm 14, \pm 15,   \pm 18, \pm 20, \pm 23, \pm 24\}\end{align*}
 By Corollary \ref{4cyc-prime}, choosing $f$ to be any permutation of order coprime to $2,3,5,7,11,23$, we obtain a parity check matrix free from $4$-cycles. Of course, one may simply choose $f$ to be a cycle of prime order, say $f = (1 \ 2 \ 3 \  \ldots \ 17)$, which gives a $(136, 103)$-code. However, we may also construct a $(208, 158)$-code with no 4-cycles using a non-cycle permutation,  \begin{align*} f=&(1 \ 2 \ \ldots \ 12 \ 13) \cdot (14 \ 15 \ldots \ 24 \ 25\ 26).\end{align*}
 \end{example}

The above results demonstrate the superiority of using derangements of prime order $m$, since for these, the only condition on $\mathcal{I}$ is that $i_r <m$, so $\mathcal{I}$ attains the largest possible size. In general, if $f$ has order $m$, for no $4$-cycles one requires that all prime factors of $m$ lie outside $\mathcal{I}_\Delta$.

 As discussed above, the case $L=2$ is always free from 6-cycles (in general, from $4k+2$ cycles). The below result characterizes the criterion for 8-cycles. 

 \begin{theorem}\label{injadd} Let $L=2, J>2$. Write $\mathcal{I} = \{i_1, i_2, \ldots, i_r\} \subseteq \Z_m$, with $i_1 < i_2< \ldots < i_r$. 
 Then the parity check matrix
 $H=\begin{pmatrix}1 & 1 &  \ldots & 1 \\
1 & f^{i_1} &  \ldots & f^{i_r}   \\
\end{pmatrix}$
 has no $8$-cycle if and only if $f^{i}$ is a derangement for every $i \in (\mathcal{I}_+)_\Delta$. In particular, if $H$ has no 8-cycle then $\mathcal{I}$ is a $B_2$-set.

\end{theorem}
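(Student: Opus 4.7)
My plan is to derive a simple algebraic form for the exponent appearing in a closed 8-cycle and then read off the equivalence directly from Lemma 2. Since $L=2$, the sequence of row-block indices $a_1,a_2,a_3,a_4 \in \{0,1\}$ of an 8-cycle ($t=4$) must alternate, so without loss of generality $(a_1,a_2,a_3,a_4)=(0,1,0,1)$. Substituting into the cycle-exponent formula derived in Section~\ref{construction},
\[i \;=\; (a_1-a_4)j_1 + (a_2-a_1)j_2 + (a_3-a_2)j_3 + (a_4-a_3)j_4 \;=\; (j_2+j_4)-(j_1+j_3).\]
Lemma~2 then says $H$ has an 8-cycle iff there exist $j_1,j_2,j_3,j_4 \in \mathcal{I}$ with the cyclic distinctness $j_1\neq j_2$, $j_2\neq j_3$, $j_3\neq j_4$, $j_4\neq j_1$, together with a column $c_0 \in \mathbb{Z}_n$ such that $f^{(j_2+j_4)-(j_1+j_3)}(c_0)=c_0$ and the intermediate partial sums do not fix $c_0$. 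The key observation is that $s:=j_1+j_3$ and $s':=j_2+j_4$ both lie in $\mathcal{I}_+$, so the cycle exponent automatically belongs to $(\mathcal{I}_+)_\Delta \cup \{0\}$.

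For the ($\Leftarrow$) direction, assume $f^i$ is a derangement for every $i\in(\mathcal{I}_+)_\Delta$ and suppose toward a contradiction that $H$ has an 8-cycle. This produces $s,s'\in \mathcal{I}_+$ with $f^{s'-s}(c_0)=c_0$ for some $c_0$. If $s'\neq s$, then $s'-s \in (\mathcal{I}_+)_\Delta$ and $f^{s'-s}$ has a fixed point, contradicting the hypothesis. If $s'=s$, then the cyclic distinctness conditions on $(j_1,j_2,j_3,j_4)$ exclude the two possible multiset equalities $\{j_1,j_3\}=\{j_2,j_4\}$ (namely $j_1=j_2,\,j_3=j_4$ and $j_1=j_4,\,j_3=j_2$), so $\mathcal{I}$ fails to be a $B_2$-set; this residual case is closed off by the ``in particular'' statement below, which shows non-$B_2$ already forces an 8-cycle.

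For the ($\Rightarrow$) direction I argue the contrapositive: assume some $i\in(\mathcal{I}_+)_\Delta$ admits a fixed point $f^i(c_0)=c_0$, and exhibit an 8-cycle. Choosing a representation $i=(j_a+j_b)-(j_c+j_d)$ with $j_a+j_b\neq j_c+j_d$, I set $(j_1,j_2,j_3,j_4):=(j_c,j_a,j_d,j_b)$. In the generic case that $j_a,j_b,j_c,j_d$ are pairwise distinct the cyclic distinctness is automatic, the eight blocks visited by the trajectory are pairwise distinct, and hence the eight positions are distinct, producing a genuine 8-cycle. For the ``in particular'' statement, if $\mathcal{I}$ is not $B_2$ there exist $j_1,\ldots,j_4\in \mathcal{I}$ with $j_1+j_3=j_2+j_4$ and $\{j_1,j_3\}\neq \{j_2,j_4\}$; an arrangement satisfying the cyclic distinctness is then easily found, and the cycle exponent becomes $0$ so every $c_0$ closes the trajectory, again yielding an 8-cycle.

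The step I expect to be the main obstacle is the degenerate subcase of the contrapositive: an element $i\in(\mathcal{I}_+)_\Delta$ may only admit representations with a coincidence among the four summands — most notably $i \in \mathcal{I}_\Delta$, writable as $i=(j_a+0)-(j_c+0)$ with $0\in \mathcal{I}$, which naively gives only a 4-cycle. Here the hypothesis $J>2$ (equivalently $|\mathcal{I}|\ge 3$) is essential: I will use an auxiliary element $j_e\in\mathcal{I}$ missing from the degenerate representation to rewrite $i$ in a nondegenerate way, and then choose $c_0$ outside the (proper) fixed-point sets of the partial-sum powers of $f$ to satisfy the nondegeneracy conditions of Lemma~2 so that a genuine 8-cycle (rather than a shorter one) is produced.
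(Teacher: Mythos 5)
Your overall route is the same as the paper's: compute the $L=2$ cycle exponent $(j_2+j_4)-(j_1+j_3)$ via the alternating row blocks and identify it with an element of $(\mathcal{I}_+)_\Delta$, and your generic cases are handled correctly. However, two steps are not actually closed. First, in the ($\Leftarrow$) direction the residual case $s'=s$ is left open: you assume an 8-cycle, conclude that $\mathcal{I}$ is not a $B_2$-set, and then appeal to the ``in particular'' statement, which only re-asserts that an 8-cycle exists --- it produces no contradiction with the hypothesis that $f^{i}$ is a derangement for every $i\in(\mathcal{I}_+)_\Delta$. The intended closure is via Lemma \ref{equivconds}: failure of the $B_2$ property means $0\in(\mathcal{I}_+)_\Delta$ (in the reading of $\mathcal{I}_+$ as the family of pair sums indexed by multisets, which Lemma \ref{equivconds} requires), and $f^{0}=\mathrm{id}$ is not a derangement, so the hypothesis is violated directly. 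This is an easy fix, but as written the case is not closed.

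The substantive gap is the degenerate case of the contrapositive, which you correctly flag but whose repair plan does not work. For $L=2$ the column-block sequence of an 8-cycle must satisfy $j_1\neq j_2$, $j_2\neq j_3$, $j_3\neq j_4$, $j_4\neq j_1$; since $j_1,j_3$ carry the $+$ signs and $j_2,j_4$ the $-$ signs, these four conditions force the pairs $\{j_1,j_3\}$ and $\{j_2,j_4\}$ to be \emph{disjoint}, not merely distinct as multisets. Your auxiliary-element rewriting $i=(u+j_e)-(x+j_e)$ keeps $j_e$ on both sides, so every arrangement has two equal cyclically adjacent column blocks and never yields an admissible trajectory, and no choice of $c_0$ can compensate. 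Moreover the obstruction is not merely technical: with column exponents $\{0,1,3\}$ the element $3\in(\mathcal{I}_+)_\Delta$ has no disjoint-pair representation at all; the only disjoint-pair exponent divisible by $3$ is $\pm 6$, realized only by the trajectory $(3,0,3,0)$, and if $f$ has cycle type $(3,p)$ with a large prime $p$, every point closing that trajectory is already fixed by $f^{3}$, so positions repeat and no genuine 8-cycle arises even though $f^{3}$ fails to be a derangement. So for general $f$ this direction cannot be completed by the proposed method; it does go through when $f$ is a prime $m$-cycle (the setting of Corollary \ref{8cyc-prime}), where the only relevant exponent is $0$ and non-$B_2$ automatically produces disjoint pairs with the nondegeneracy conditions satisfiable. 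Note that the paper's own proof is silent on exactly this point (it only requires $\{i_2,i_4\}\neq\{i_1,i_3\}$ rather than disjointness of the pairs, and ignores the conditions preventing a 4-cycle traversed twice), so you have correctly located the crux of the argument --- but your proposal does not close it.
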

\begin{proof}
An 8-cycle can be represented by a trajectory of the form $a_{4}j_1\rightarrow  {a_1j_1}\rightarrow a_1j_2\rightarrow  \ldots \rightarrow a_{4}j_1$. The condition for an 8-cycle is $c_0 = f^{i}(c_0)$ for some $c_0 \in \Z_n$, where $i=[(i_2+i_4)-(i_1+i_3)]$ for $\{i_2,i_4\}\neq \{i_1, i_3\}$, $i_k \in \mathcal{I}, \ 1\leq i\leq 4$. In other words, $i \in (\mathcal{I}_+)_\Delta$. The result is now clear.
\end{proof}


With an additional condition, the above necessary result also can be made sufficient. 
\begin{corollary}\label{8cyc-prime}
If $L=2, \ J>2$, $m=n$ is prime and $f$ is an $m$-cycle, then $H$ has no $8$-cycles if and only if $\mathcal{I}$ is a $B_2$-set.
\end{corollary}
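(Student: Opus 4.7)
The plan is to deduce this as a direct corollary of Theorem~\ref{injadd} combined with Lemma~\ref{equivconds}, using the arithmetic behavior of powers of a prime-order cycle.

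First I would invoke Theorem~\ref{injadd}, which already gives the equivalence: $H$ has no $8$-cycle if and only if $f^{i}$ is a derangement for every $i \in (\mathcal{I}_+)_\Delta$. So the entire task reduces to translating the derangement condition on $f^i$ into a purely number-theoretic statement about $\mathcal{I}$ under the extra hypothesis that $f$ is an $m$-cycle with $m=n$ prime.

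Next I would observe the following elementary fact about prime-order cycles: if $f$ is an $m$-cycle with $m$ prime, then for any $i \in \Z_m$, the power $f^i$ is either the identity (when $i \equiv 0 \pmod m$) or again an $m$-cycle (when $i \not\equiv 0$, since $\gcd(i,m)=1$). In the latter case, $f^i$ has no fixed points. Hence $f^i$ is a derangement if and only if $i \not\equiv 0 \pmod m$. Applying this to every $i \in (\mathcal{I}_+)_\Delta$, the condition from Theorem~\ref{injadd} becomes exactly $0 \notin (\mathcal{I}_+)_\Delta$.

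Finally I would close the argument by quoting the equivalence of (1) and (3) in Lemma~\ref{equivconds}: the condition $0 \notin (\mathcal{I}_+)_\Delta$ is precisely the statement that $\mathcal{I}$ is a $B_2$-set in $\Z_m$. Chaining the two equivalences gives the desired biconditional.

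There is essentially no serious obstacle; this is a clean packaging step. The only subtlety worth flagging is the use of the primality of $m$: without it, $f^i$ may fail to be a derangement even for $i \not\equiv 0 \pmod m$ (since some proper divisor of $m$ could land in $(\mathcal{I}_+)_\Delta$), which is exactly why the unconditional direction in Theorem~\ref{injadd} only produces a necessary condition. Under the prime-$m$-cycle hypothesis this gap disappears and the equivalence becomes two-way.
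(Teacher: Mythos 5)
Your proposal is correct and follows essentially the same route as the paper: apply Theorem~\ref{injadd}, use that a power $f^i$ of a prime $m$-cycle is a derangement exactly when $i\not\equiv 0 \pmod m$, and conclude via the equivalence $0\notin(\mathcal{I}_+)_\Delta \iff \mathcal{I}$ is a $B_2$-set from Lemma~\ref{equivconds}. In fact your phrasing of the derangement criterion fixes a small slip in the paper's wording (which says ``derangement if and only if $i=0 \bmod m$'' where $i\neq 0$ is meant).
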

\begin{proof}
If $m=n$ is prime and $f$ is an $m$-cycle, then $f^i$ is a derangement if and only if $i=0 \mod m$. Thus, $H$ has an 8-cycle if and only if $0 \in  ({\mathcal{I}_+})_\Delta $, which, in turn, happens if and only if  $\mathcal{I}$ is not a $B_2$-set by Lemma~\ref{equivconds}.
\end{proof}


From Corollaries~\ref{4cyc-prime} and \ref{8cyc-prime} we have

\begin{corollary}
In the case $L=2, \ J>2$, if $f$ is an $m$-cycle, and $n=m$ is prime, we have \begin{align*} \mathcal{I} \text{ is an $m$-$B_2$-set } \implies g(H) = 12.\end{align*} 
\end{corollary}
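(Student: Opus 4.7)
The plan is a two-sided sandwich argument that collates the preceding corollaries with the explicit 12-cycle construction given earlier in this section. I will show $g(H) \geq 12$ by ruling out $4$-, $6$-, $8$-, and $10$-cycles, and then show $g(H) \leq 12$ by exhibiting a concrete 12-cycle.

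For the lower bound, I would treat the three cycle lengths separately. First, Corollary~\ref{4cyc-prime} immediately forbids $4$-cycles: since $m=n$ is prime and $f$ is an $m$-cycle, $f$ is a derangement of prime order, so the hypothesis of that corollary is satisfied. Second, the opening remark of subsection ``Case $L=2$'' shows that when $L=2$ the row index along a cycle trajectory can only alternate between two values, forcing every cycle length to be a multiple of $4$; this automatically kills all $(4k+2)$-cycles and in particular $6$- and $10$-cycles. Third, Corollary~\ref{8cyc-prime} kills $8$-cycles, because it gives $\mathcal{I}$ being an $m$-$B_2$-set as an equivalent condition for the absence of $8$-cycles in precisely our setting ($L=2$, $J>2$, $m=n$ prime, $f$ an $m$-cycle). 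Together these three exclusions yield $g(H) \geq 12$.

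For the upper bound I would reuse the telescoping construction in the paragraph ``We revisit 12-cycles and demonstrate Theorem~\ref{12c} in this setting''. Since $J>2$ gives $|\mathcal{I}|\geq 3$, I can select three pairwise distinct elements of $\mathcal{I}$, label them $j_1=j_4=0$, $j_2=j_5=j$, $j_3=j_6=k$ with $j,k$ distinct and nonzero, and in the block-row index take $a_0=a_2=a_4=0$, $a_1=a_3=a_5=1$ (legitimate since $\mathcal{A}=\{0,1\}$ for $L=2$). The computation already displayed shows that the total exponent $i=(a_1-a_0)(j-k)+(a_0-a_1)(-k+j)$ is identically $0$, so $f^i=\mathrm{Id}$ fixes any starting column $c_0$ and a genuine $12$-cycle is produced. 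Hence $g(H)\leq 12$.

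The only non-routine point I anticipate is a bookkeeping check: verifying that the $12$ positions visited by the exhibited trajectory are genuinely pairwise distinct (except for the required first$=$last identification), i.e.\ that the $j_i,a_i$ pattern satisfies $a_i\neq a_{i+1}$, $j_i\neq j_{i+1}$ and $j_6\neq j_1$, $a_6\neq a_1$. With the choices above all of these reduce to $j\neq 0$, $k\neq 0$, $j\neq k$, $0\neq 1$, which hold by construction. Combining the two bounds gives $g(H)=12$.
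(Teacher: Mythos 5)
Your argument is correct and is essentially the paper's own: the corollary there is stated as an immediate consequence of Corollary~\ref{4cyc-prime} (no $4$-cycles), Corollary~\ref{8cyc-prime} (no $8$-cycles when $\mathcal{I}$ is an $m$-$B_2$-set), the observation that $L=2$ excludes all $(4k+2)$-cycles, and the previously exhibited universal $12$-cycle. Your extra bookkeeping check that the alternating trajectory with $0,j,k$ distinct is a legitimate $12$-cycle is a harmless elaboration of what the paper leaves implicit.
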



\begin{example}
Consider $\mathcal{I}= \{0, 1,4,6,13\}$. Then we have  $\mathcal{I}+= \{0,1,2,4,5,6,7,8,10,12, 13,14,17,19,26\}$. By Lemma~\ref{equivconds}, $\mathcal{I}$ is a $B_2$-set (over $\Z$). Further, we have $\mathcal{I}_\Delta= \{\pm 1, \pm 2, \pm 3, \pm 4, \pm 5, \pm 6, \pm 7, \pm 9, \pm 12, \pm 13 \}$. Thus, choosing any $m$ coprime to $2,3,5,7,13,17,19$, we get that $0 \pmod m \not \in \mathcal{I}_\Delta$  and $\mathcal{I}$ is an $m$-$B_2$ set.
As a concrete example, $f$ can be taken to be the 29-cycle, $f=(1 \ 2 \ 3 \ 4 \ \ldots \ 29)$. Then the parity check matrix $H=\begin{pmatrix}1 & 1 &  1 & 1 &1 \\
1 & f^{1} &  f^4 & f^6 & f^{13}   \\
\end{pmatrix}$ has girth $g(H)=12$. This gives a $(145, 88)$-code with girth 12.
\end{example}

 \subsection{Case $L>2$}
 
 In this section, we study the case $L>2$. By symmetry, and since we have already dealt with the case $L=2, J>2$, we may also assume that $J>2$. Here, the parity check matrix is given by $H=\begin{pmatrix}1 & 1 &  \ldots & 1 \\ 
1 & f^{A_1i_1} &  \ldots & f^{A_1i_r}   \\ 
1 & f^{A_2i_1} &  \ldots & f^{A_2i_r}   \\ 
\vdots & \vdots &  \ddots & \vdots \\
1 & f^{A_Ji_1} &  \ldots & f^{A_Ji_r}   \\ 
\end{pmatrix}$, $L>2$. We begin by obtaining a condition for $4$-cycles, which can easily be seen as a generalization of Theorem~\ref{4cyc}.
 
 \begin{theorem}\label{4cyc-lg2}
 $H$ contains no 4-cycle if and only if $f^i$ is a derangement for all $i \in \mathcal{A}_\Delta\cdot \mathcal{I}_\Delta$.
 \end{theorem}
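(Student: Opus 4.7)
The plan is to mirror the proof of Theorem~\ref{4cyc}, but using the full cycle-exponent formula derived in Section~\ref{construction}, which automatically accommodates a nontrivial set $\mathcal{A}$ of row exponents. A $4$-cycle in $H$ has, up to a cyclic relabelling of its vertices, a trajectory of the form
$$a_2 j_1 \rightarrow a_1 j_1 \rightarrow a_1 j_2 \rightarrow a_2 j_2 \rightarrow a_2 j_1,$$
with $a_1, a_2 \in \mathcal{A}$, $a_1 \neq a_2$, and $j_1, j_2 \in \mathcal{I}$, $j_1 \neq j_2$. Paths that happen to touch the identity blocks in the first row or column are automatically included because $0 \in \mathcal{A}$ and $0 \in \mathcal{I}$.

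Specialising the formula
$$i = (a_1 - a_t)j_1 + (a_2 - a_1)j_2 + \cdots + (a_t - a_{t-1})j_t$$
to $t=2$ gives the closure exponent
$$i = (a_1 - a_2) j_1 + (a_2 - a_1) j_2 = (a_1 - a_2)(j_1 - j_2),$$
so that $c_4 = f^{(a_1-a_2)(j_1-j_2)}(c_0)$. Hence a $4$-cycle through $(a_1, a_2, j_1, j_2)$ exists if and only if the permutation $f^{(a_1-a_2)(j_1-j_2)}$ has a fixed point, i.e. fails to be a derangement.

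To finish, I need to observe that as $(a_1,a_2)$ and $(j_1,j_2)$ range over ordered pairs of distinct elements of $\mathcal{A}$ and $\mathcal{I}$ respectively, the products $(a_1 - a_2)(j_1 - j_2)$ sweep out precisely the set $\mathcal{A}_\Delta \cdot \mathcal{I}_\Delta$. This is immediate from the definition $\mathcal{B}_\Delta = \{a-b : a,b \in \mathcal{B},\ a \neq b\}$ in Section~\ref{conc}. Taking the logical negation over all such tuples yields the claimed equivalence: $H$ is $4$-cycle free iff $f^i$ is a derangement for every $i \in \mathcal{A}_\Delta \cdot \mathcal{I}_\Delta$.

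There is no real obstacle in this argument; the only care required is to verify that the cycle-trajectory parameterisation from the general $2t$-cycle discussion really gives a bijection between $4$-cycles of $H$ and admissible $4$-tuples $(a_1, a_2, j_1, j_2)$ modulo the obvious symmetries (reversal of direction and cyclic shift), which is the content of the lemma immediately preceding this theorem in the $k=2$ case.
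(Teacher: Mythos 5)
Your proposal is correct and follows essentially the same route as the paper's proof: both parameterise a $4$-cycle by the trajectory $a_2j_1\rightarrow a_1j_1\rightarrow a_1j_2\rightarrow a_2j_2\rightarrow a_2j_1$, compute the closure exponent $i=(a_1-a_2)(j_1-j_2)$, identify the set of such exponents with $\mathcal{A}_\Delta\cdot\mathcal{I}_\Delta$, and conclude that a $4$-cycle exists precisely when some $f^i$, $i\in\mathcal{A}_\Delta\cdot\mathcal{I}_\Delta$, has a fixed point.
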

 \begin{proof}
 Note that a 4-cycle can in general be represented by the trajectory $a_{2}j_1\rightarrow  {a_1j_1}\rightarrow a_1j_2\rightarrow {a_2j_2} \rightarrow a_2j_1$, or by the sequence $(c_0,j_0)\rightarrow (r_0,c_0)\rightarrow (r_1,c_1)\rightarrow (r_1, c_1)\rightarrow (r_1, c_0)$ where the column index $c_2$ is given by $c_2 = f^i(c_0)$ with $i = (a_1-a_2)(j_1-j_2) $, so $i \neq 0$ since $j_1 \neq j_2$, $a_0 \neq a_1$, and $i \in \mathcal{A}_\Delta\cdot \mathcal{I}_\Delta$. Conversely, given any $i \in \mathcal{A}_\Delta\cdot \mathcal{I}_\Delta$, one can write $i$ in the above form and associate a 4-cycle to it if $f^i$ is not a derangement.
 \end{proof}
 
 \begin{corollary}
 If 
  $f$ is a prime $m$-cycle, then $g(H) \geq 6$.
 \end{corollary}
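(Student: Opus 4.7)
The plan is to invoke Theorem~\ref{4cyc-lg2} and then exploit the field structure of $\Z_m$ when $m$ is prime. Since the girth is an even integer $2k$ with $k>1$, the conclusion $g(H)\ge 6$ is equivalent to saying $H$ contains no $4$-cycle, so by Theorem~\ref{4cyc-lg2} it suffices to show that $f^i$ is a derangement for every $i\in\mathcal{A}_\Delta\cdot\mathcal{I}_\Delta$.

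First I would record the characterization of derangement powers of $f$. Because $f$ is a single $m$-cycle, the power $f^i$ acts as a single cycle of length $m/\gcd(i,m)$ (with $m/\gcd(i,m)$ orbits), and in particular $f^i$ has a fixed point if and only if $m\mid i$, i.e.\ $i\equiv 0\pmod m$. With $m$ prime, this says simply: $f^i$ is a derangement $\iff i\not\equiv 0\pmod m$.

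Next I would check that the set $\mathcal{A}_\Delta\cdot\mathcal{I}_\Delta$ avoids $0$ modulo $m$. By construction, the elements of $\mathcal{A}$ (resp.\ $\mathcal{I}$) are pairwise distinct in $\Z_m$, so every element of $\mathcal{A}_\Delta$ (resp.\ $\mathcal{I}_\Delta$) is a nonzero element of $\Z_m$. Since $m$ is prime, $\Z_m$ is a field and hence an integral domain, so a product $a\cdot j$ with $a\in\mathcal{A}_\Delta$ and $j\in\mathcal{I}_\Delta$ is nonzero in $\Z_m$. Thus every $i\in\mathcal{A}_\Delta\cdot\mathcal{I}_\Delta$ satisfies $i\not\equiv 0\pmod m$, so $f^i$ is a derangement. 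Applying Theorem~\ref{4cyc-lg2} gives the absence of $4$-cycles, and hence $g(H)\ge 6$.

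There is essentially no obstacle here: the whole argument is a one-line application of Theorem~\ref{4cyc-lg2} combined with the fact that $\Z_m$ has no zero divisors when $m$ is prime. The only small care needed is to make sure one is working with the correct notion of ``derangement for $f$ an $m$-cycle,'' which reduces cleanly to the condition $i\not\equiv 0\pmod m$, and to note that the differences defining $\mathcal{A}_\Delta$ and $\mathcal{I}_\Delta$ are taken between distinct elements, guaranteeing nonzero factors.
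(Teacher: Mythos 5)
Your proof is correct and takes essentially the same route as the paper: reduce $g(H)\ge 6$ to the absence of $4$-cycles via Theorem~\ref{4cyc-lg2}, then note that every element of $\mathcal{A}_\Delta\cdot\mathcal{I}_\Delta$ is a product of nonzero residues and hence nonzero modulo the prime $m$, so each corresponding $f^i$ is a derangement. (Only a cosmetic slip: $f^i$ decomposes into $\gcd(i,m)$ cycles of length $m/\gcd(i,m)$ rather than being a single cycle with $m/\gcd(i,m)$ orbits, but your fixed-point criterion $m\mid i$ is still correct.)
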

 \begin{proof}
 We always have $0 \not \in \mathcal{A}_\Delta$ and $0 \not \in \mathcal{I}_\Delta$, so by the primeness of $m$, $0 \not \in \mathcal{A}_\Delta\cdot \mathcal{I}_\Delta$. Thus for each $i \in \mathcal{A}_\Delta\cdot \mathcal{I}_\Delta$, $i$ is coprime to $m$ and so $f^i$ is a derangement. Thus by Theorem~\ref{4cyc-lg2}, $H$ has no 4-cycles.
 \end{proof}

\begin{example}\label{ex3}
Take $\mathcal{A}=\{0,1,-1\}$, so $\mathcal{A}_\Delta= \{-2,-1,1,2\}$. Here, $H=\begin{pmatrix}1 & 1 &  \ldots & 1 \\ 
1 & f^{i_1} &  \ldots & f^{i_r}   \\ 
 1& f^{-i_1}  &  \ldots & f^{-i_r}  \end{pmatrix}$. By Theorem~\ref{4cyc-lg2}, $H$ has no 4-cycle if and only if $f^i$ is a derangement for all $i \in \mathcal{I}\cup 2\mathcal{I}\cup \mathcal{I}_\Delta \cup 2\mathcal{I}_\Delta$. Take $\mathcal{I}=\{0,2,4,6,8\}$. Then $\mathcal{I}_\Delta=\{2,4,6,8, -2,-4,-6,-8\}$, $2\mathcal{I}=\{4,8,12,16\}$, $2\mathcal{I}_\Delta=\{4,6,8,12, -4,-8,-12,-16\}$. Taking $m$ to be a prime larger than $16$ and $f$ to be an $m$-cycle, we get $g(H)\geq 6$. E.g. with $f = (1 \ 2 \ 3 \  \ldots \ 17)$, $H$ is a $(85, 52)$-code.  
 \end{example}

Note that when $L>2$, 6-cycles (more generally, $2k$-cycles for $k$ odd) may occur in $H$. We now present an equivalent condition for 6-cycles.

 \begin{theorem}
 If $L, J>2$, then $H$ contains no 6-cycles if and only if $\mathcal{I}$ and $\mathcal{A}$ have the following property: For three distinct $j_1, j_2, j_3 \in \mathcal{I}$ and any $c_1, c_2 \in \mathcal{A}_\Delta$ such that $(c_1+c_2) \in \mathcal{I}_\Delta$, we have $f^{c_1j_1+c_2j_2 - (c_1+c_2)j_3}$ is a derangement.  In particular, if $f$ is a prime $m$-cycle and for any $c_1, c_2 \in\mathcal{A}_\Delta$ such that $(c_1+c_2) \in \mathcal{A}_\Delta$, and any three distinct $j_1, j_2, j_3 \in \mathcal{I}$, we have ${c_1j_1+c_2j_2 \neq (c_1+c_2)j_3} \mod m$, then $H$ contains no 6-cycles.
 \end{theorem}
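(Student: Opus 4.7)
The plan is to specialize the general $2t$-cycle trajectory formula of Section~\ref{construction} to $t=3$. By that formula, a 6-cycle in $H$ is equivalent to the existence of pairwise distinct $a_1, a_2, a_3 \in \mathcal{A}$, pairwise distinct $j_1, j_2, j_3 \in \mathcal{I}$, and a column index $c_0 \in \Z_n$ with $f^i(c_0)=c_0$, where
\[ i = (a_1-a_3)j_1 + (a_2-a_1)j_2 + (a_3-a_2)j_3. \]

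I would then make the substitution $c_1 = a_1 - a_3$, $c_2 = a_2 - a_1$. Since $a_3 - a_2 = -(c_1 + c_2)$, the exponent rewrites as $i = c_1 j_1 + c_2 j_2 - (c_1+c_2)j_3$. The requirement that $a_1, a_2, a_3$ are pairwise distinct translates precisely into the three conditions $c_1 \in \mathcal{A}_\Delta$, $c_2 \in \mathcal{A}_\Delta$, and $c_1+c_2 \in \mathcal{A}_\Delta$ (the last being $a_2 - a_3$). Thus $H$ contains a 6-cycle iff there exist pairwise distinct $j_1, j_2, j_3 \in \mathcal{I}$ and elements $c_1, c_2 \in \mathcal{A}_\Delta$ with $c_1+c_2 \in \mathcal{A}_\Delta$ such that $f^{c_1 j_1 + c_2 j_2 - (c_1+c_2)j_3}$ has a fixed point, which is exactly the contrapositive of the stated condition.

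For the ``in particular'' clause, I would invoke the observation, already used earlier in the section, that when $f$ is a prime $m$-cycle, $f^i$ is a derangement iff $i \not\equiv 0 \pmod m$. Under that hypothesis, the derangement requirement collapses to the nonvanishing congruence $c_1 j_1 + c_2 j_2 \not\equiv (c_1+c_2) j_3 \pmod m$, matching the assertion verbatim.

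The main (and relatively mild) obstacle is the bookkeeping in the $(a_1, a_2, a_3) \leftrightarrow (c_1, c_2)$ translation: one must carefully verify that requiring all of $c_1$, $c_2$, $c_1+c_2$ to lie in $\mathcal{A}_\Delta$ is precisely the constraint characterizing those pairs $(c_1, c_2)$ that arise as difference chains $(a_1-a_3, a_2-a_1)$ from pairwise distinct triples in $\mathcal{A}$, so that both implications of the biconditional align. Beyond this translation, the argument is a direct substitution into the $2t$-cycle exponent formula already in place, so no new technique is needed.
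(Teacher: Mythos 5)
Your proposal is correct and follows essentially the same route as the paper's proof: specialize the general $2t$-cycle exponent formula to $t=3$, substitute $c_1=a_1-a_3$, $c_2=a_2-a_1$ so that the third coefficient is $-(c_1+c_2)$, and invoke the prime-$m$-cycle derangement criterion for the second clause. Your careful remark that pairwise distinctness of the $a_i$ corresponds exactly to $c_1,c_2,c_1+c_2\in\mathcal{A}_\Delta$ is in fact slightly cleaner than the paper's write-up, which at that point mistakenly writes $\mathcal{I}_\Delta$ where $\mathcal{A}_\Delta$ is meant.
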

 \begin{proof}
  For  6-cycles, $c_3 =f^i(c_0)$ with $$i = (a_0-a_2)j_1 +(a_1-a_0)j_2+ (a_2-a_1)j_3 $$ with $j_1\neq j_2, j_2\neq j_3, j_3 \neq j_1$. Clearly this case requires $J>2$, and $a_0 \neq a_1, a_0 \neq a_2, a_1\neq a_2$. We can write $i= c_1j_1+ c_2j_2+ (-c_1-c_2)j_3$, with $c_1, c_2, c_1+c_2 \in \mathcal{I}_\Delta$. Conversely, $c_1, c_2 \in\mathcal{A}_\Delta$ such that $(c_1+c_2) \in \mathcal{A}_\Delta$ and  ${c_1j_1+c_2j_2 = (c_1+c_2)j_3} \mod m$ for any three distinct $j_1, j_2, j_3 \in \mathcal{I}$ leads to $i=0$, i.e. to a $6$-cycle. The result is now clear. 
 \end{proof}

We now turn our attention to 8-cycles. Recall that any $8$-cycle has a trajectory that can be represented as 
\begin{align*}\label{2t-cycle}
    a_{4}j_1\rightarrow  {a_1j_1}\rightarrow a_1j_2\rightarrow \ldots  \rightarrow  
     a_{4}j_{4} \rightarrow a_{4}j_1
\end{align*}
 where $a_k \in \mathcal{A}$, $j_k \in \mathcal{I}$, $a_i \neq a_{i+1},\ j_i \neq j_{i+1}, \ 1\leq i\leq t-1$,  $a_{t} \neq a_1$, $j_t \neq j_1$.
 Thus, we have $c_{4}=f^{i} (c_0)$ where \begin{align*}
    i =&(j_{4}-j_1)a_{4} + (j_1-j_2)a_1+ (j_2-j_3)a_2+ (j_{3}-j_{4})a_3 \\ = &(a_1-a_{4})j_1 + (a_2-a_1)j_2+(a_3-a_2)j_3 + (a_4-a_3)j_{4} 
\end{align*} and $c_{4}$ denotes the column index of the $8^{th}$ term of the $8$-cycle. 

In this case, we have the following necessary condition.
 \begin{theorem}\label{8cyclg2}
 If $L>2$ and $\mathcal{A}$ or $\mathcal{I}$ is not an $m$-$\mathcal{B}_2$-set then $H$ contains an 8-cycle.
 \end{theorem}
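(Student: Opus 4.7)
The roles of $\mathcal{A}$ and $\mathcal{I}$ are symmetric in the construction — transposing $H$ interchanges them and replaces $f$ by $f^{-1}$ (of the same order $m$) without altering the set of cycles — so I would first reduce, without loss of generality, to the case in which $\mathcal{I}$ fails to be an $m$-$B_2$-set. From this failure I would extract elements $j_1,j_2,j_3,j_4\in\mathcal{I}$ with $j_1+j_3\equiv j_2+j_4\pmod{m}$ and $\{j_1,j_3\}\neq\{j_2,j_4\}$ as multisets. A short case analysis shows that each potential coincidence $j_1=j_2,\ j_2=j_3,\ j_3=j_4,$ or $j_4=j_1$ is excluded by the multiset inequality combined with the sum equation (any such coincidence forces the other element of the pair to coincide as well, contradicting the multiset inequality), so $(j_1,j_2,j_3,j_4)\in\mathcal{I}_S^4$ is an admissible cyclic column-index sequence.

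Since $L>2$, I can also pick two distinct elements $\alpha\neq\beta$ in $\mathcal{A}$. Taking the row-sequence $(a_1,a_2,a_3,a_4)=(\alpha,\beta,\alpha,\beta)$ produces the difference sequence $(b_1,b_2,b_3,b_4)=(\alpha-\beta,\beta-\alpha,\alpha-\beta,\beta-\alpha)\in\mathcal{A}_D^4$, and the total exponent along the resulting trajectory simplifies, using the formula preceding the theorem, to
\begin{align*}
b_1j_1+b_2j_2+b_3j_3+b_4j_4 \;=\; (\alpha-\beta)\bigl[(j_1+j_3)-(j_2+j_4)\bigr] \;\equiv\; 0 \pmod{m}.
\end{align*}
Hence the corresponding power of $f$ is the identity and the block-level trajectory closes for every starting $c_0\in\Z_n$. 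In the generic case where $j_1,j_2,j_3,j_4$ are pairwise distinct, the eight visited blocks cover the full $\{\alpha,\beta\}\times\{j_1,j_2,j_3,j_4\}$ grid and occupy pairwise disjoint $n\times n$ rectangles in $H$, so the eight positions are automatically distinct and a genuine 8-cycle is produced.

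I expect the main obstacle to lie in the degenerate sub-case in which the collision is forced to involve a repeated $j$-value, i.e.\ $j_1=j_3$ (or, symmetrically, $j_2=j_4$) with $2j_1\equiv j_2+j_4\pmod{m}$. Two of the eight blocks are then visited twice, and distinctness of the in-block positions reduces to requiring $c_0$ to avoid the fixed-point set of $f^{(\alpha-\beta)(j_1-j_2)}$. Because $f$ has order exactly $m$, this fixed-point set is a proper subset of $\Z_n$ whenever the exponent is non-zero modulo $m$, so a valid $c_0$ exists in that case. To handle the residual possibility that $(\alpha-\beta)(j_1-j_2)\equiv 0\pmod{m}$ (which can only happen via non-trivial divisors of $m$), I would exploit the freedom to choose a different pair $\alpha,\beta$ from $\mathcal{A}$ — available because $|\mathcal{A}|\geq 3$ when $L>2$ — or to cyclically rotate the admissible sequence; the crux of the argument is to show that some such choice always yields a non-vanishing exponent and hence a genuine 8-cycle.
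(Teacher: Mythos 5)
Your construction is essentially the paper's own proof: extract from the failed $B_2$ property a quadruple with equal pair sums and different multisets, check that cyclically adjacent entries are distinct (your case analysis here is correct, and in fact more accurate than the paper's passing claim, which asserts distinctness of the \emph{opposite} entries $j_1\neq j_3$, $j_2\neq j_4$ -- something that need not hold), pair it with an alternating choice of two block rows, and observe that the total exponent $(\alpha-\beta)\bigl[(j_1+j_3)-(j_2+j_4)\bigr]$ vanishes modulo $m$, so the trajectory closes for every $c_0$. The paper handles the two cases ($\mathcal{A}$ not $B_2$, $\mathcal{I}$ not $B_2$) separately rather than by transposition, but that symmetry is exactly the paper's own remark about interchanging $\mathcal{I}$ and $\mathcal{A}$, so this is not a substantive difference.

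Where you go beyond the paper is the degeneracy analysis, and your instinct is right: the paper stops at ``the exponent is zero, hence an $8$-cycle'' and never checks that the eight positions are distinct when the quadruple contains a repeated value (unavoidable when the violation has the form $2j_1\equiv j_2+j_4$). Your reduction of distinctness to the single condition $f^{(\alpha-\beta)(j_1-j_2)}(c_0)\neq c_0$ is correct (all potential coincidences impose that one condition), and your choice of a non-fixed $c_0$ settles it whenever the exponent is nonzero mod $m$. But the residual sub-case you leave open cannot always be closed: take $m=n=12$, $f$ a $12$-cycle, $\mathcal{A}=\{0,4,8\}$, $\mathcal{I}=\{0,3,6\}$. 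Both sets fail to be $m$-$B_2$-sets, yet every exponent $A_k i_s$ is $\equiv 0 \pmod{12}$, so all blocks of $H$ are identity matrices; the Tanner graph is a disjoint union of copies of $K_{3,3}$, which contains no $8$-cycle (each component has only three check and three variable nodes, and a degree count also rules out closed trails of length $8$). Hence no choice of $\alpha,\beta$, rotation, or alternative violating quadruple can rescue the argument there. In other words, the gap you flag is genuine, but it is a gap in the theorem and in the paper's proof as stated (which implicitly works in the non-degenerate situation, e.g. pairwise distinct $j$'s, or settings such as $f$ a prime $m$-cycle where the relevant exponent cannot vanish); apart from this shared issue, your argument covers everything the paper's proof does, and does so more carefully.
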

 \begin{proof}
 As above, we have, for  8-cycles, $$i = (a_0-a_3)j_1 +(a_1-a_0)j_2+ (a_2-a_1)j_3+(a_3-a_2)j_4.$$  First suppose that $\mathcal{A}$ is not an $m$-$B_2$ set. Then there exists a sequence $(a_0,a_1,a_2,a_3)$ such that $a_0+a_2 = a_1+a_3$, with $\{a_0,a_2\} \neq \{a_1, a_3\}$. It is easy to see that we must have $a_0 \neq a_1, a_1 \neq a_2, a_2 \neq a_3, a_3 \neq a_0$ here, so this sequence lies in $\mathcal{A}_d^4$. Set $j_1=j_3$, $j_2=j_4$. Then $$i = (a_0-a_1+a_2-a_3)(j_1-j_2)=0.$$
 Thus, in this case $H$ always has an 8-cycle. 
 
  Similarly, suppose that $\mathcal{I}$ is not an $m$-$B_2$ set. Then there exist $j_1, j_2, j_3, j_4 \in I$ such that $j_1+j_3 = j_2+j_4$, with $\{j_1,j_3\} \neq \{j_2, j_4\}$. Again, it is easy to see that we must have $j_1 \neq j_3, j_2 \neq j_4$.  Thus, this sequence lies in $\mathcal{I}_d^4$.  Set $a_1=a_3$, $a_0=a_2$, so that $i = (j_1-j_2+j_3-j_4)(a_0-a_1)$ and $
      a_{1}j_1\rightarrow  {a_0j_1}\rightarrow a_0j_2\rightarrow {a_1j_2} \rightarrow a_1j_1\rightarrow a_0j_1\rightarrow  a_0j_2 \rightarrow a_{1}j_{2} \rightarrow a_{1}j_1
  $ defines an 8-cycle.
 \end{proof}
 
 \begin{example}
Again, take $\mathcal{A}=\{0,1,-1\}$, so $\mathcal{A}_\Delta= \{-2,-1,1,2\}$. Here, $H=\begin{pmatrix}1 & 1 &  \ldots & 1 \\ 
1 & f^{i_1} &  \ldots & f^{i_r}   \\ 
 1& f^{-i_1}  &  \ldots & f^{-i_r}   \\\end{pmatrix}$. Let $f$ be a prime $m$-cycle. Write $c_1 = a_0-a_2, \ c_2 = a_1-a_0, \ c_3 = a_2-a_1=-(c_1+c_2)$. All possibilities for these constants are listed below. \begin{table}[H]
    \centering
    \begin{tabular}{c|c|c|c|c|c}
     $a_0$ & 1 & -1 &  1 & -1 \\
     $a_1$  & 0 & 0 &  -1 & 1\\
     $a_2$ & -1 & 1 &  0 &0 \\
     \hline
     $c_1$& 2 & -2 &  1 & -1\\
     $c_2$  & -1 & 1  & -2& 2 \\
     $c_3$  & -1 & 1  &1& -1\\
    \end{tabular}
    \caption{Caption}
    \label{tab:my_label}
\end{table}

Note that here, $\{c_1,c_2,c_3\} =\{1,1,-2\}$ or $\{-1,-1,2\}$. Further, $(c_1,c_2)\in \{(1,1),(-1,-1), (-1,2), (1,-2) \}$ are the only possibilities of subsets of $\mathcal{A}_\Delta$ such that $c_1+c_2 \in \mathcal{A}_\Delta$.  By Theorem 8, and by the primeness of $m$, $H$ has no 6-cycles if and only if $\mathcal{I} = \{0,i_1, \ldots, i_r\}$ satisfies the following property: for distinct $j_1,j_2,j_3 \in \mathcal{I}$, we have $2j_1 = j_2+j_3 \iff j_1=j_2=j_3$. In loose terms, $\mathcal{I}$ does not contain the averages of its terms.

For instance, take $\mathcal{I}=\{0, 1, 4, 6, 10\}$, $m=n=17$. 
By Corollary~\ref{4cyc-prime}, $H$ has no 4-cycles. It is also clear that for all distinct $j_1, j_2, j_3 \in \mathcal{I}$, we have $j_1+j_2 \neq 2j_3$. Thus, $H=\begin{pmatrix}1 & 1 &  1 & 1& 1 \\ 
1 & f &  f^4 & f^6 & f^8  \\ 
 1& f^{-1}  &  f^{-4} &  f^{-6} & f^{-8}   \\\end{pmatrix}$ has no 4- or 6-cycles. However, by Theorem~\ref{8cyclg2}, it does have 8-cycles, since $\mathcal{A}$ is not an $m$-$B_2$ set, so it has girth $g(H) = 8$.

However, with $\mathcal{I}=\{0, 1, 2, 3, 4\}$, we have $2\cdot 2 = 0+4$, so $H=\begin{pmatrix}1 & 1 &  1 & 1& 1 \\ 
1 & f^1 &  f^2 & f^3 & f^4  \\ 
 1& f^{-1}  &  f^{-2} &  f^{-3} & f^{-4}   \\\end{pmatrix}$ has 6-cycles. From Example~\ref{ex3}, with $m=17$ and $f$ an $m$-cycle, $H$ is free from 4-cycles. Thus, $g(H)= 6$. 
\end{example}

\section{Extension to irregular LDPC codes}\label{irregular}

As observed e.g. in \cite{irregular} irregular LDPC codes can outperform regular LDPC codes. In this section, we describe how our previous considerations can be easily modified in order to obtain irregular LDPC codes.

First, consider again a parity-check matrix in the general form $$H=\begin{pmatrix}1 & 1 &  \ldots & 1 \\ 
1 & f^{A_1i_1} &  \ldots & f^{A_1i_r}   \\ 
1 & f^{A_2i_1} &  \ldots & f^{A_2i_r}   \\ 
\vdots & \vdots &  \ddots & \vdots \\
1 & f^{A_Ji_1} &  \ldots & f^{A_Ji_r}   \\ 
\end{pmatrix}.$$
To obtain a parity-check matrix of an irregular code extend $H$ to a new parity-check matrix
$$\hat{H}:=[H\ h_{r+1}\ \cdots\ h_L]$$
by adding additional blocks of columns of the form $h_s:=\begin{pmatrix} 0\\ \vdots\\ 0\\ f^{A_ki_s}\\ 0\\ \vdots\\ 0 \end{pmatrix}$, where $0$ represents the all-zero map on $\{1,\hdots,n\}$, $s\in\{r+1,\hdots,L\}$ and for each $s$ the parameter $k$ can be any value from $\{0,\hdots,J\}$ determining the position of the nonzero block in $h_s$, where we set $A_0:=0$. One can add arbitrarily many blocks of columns of this form without changing the number and sizes of cycles in the corresponding Tanner graph. Moreover, it is possible to increase the rate of the code with the help of these additional columns as illustrated in the following examples.

\begin{example}
Consider $m=n=17$ and let $f$ be the cyclic shift by one position, i.e. $f:\mathbb Z_{17}\rightarrow\mathbb Z_{17},\ x\mapsto x+1$. Then, $id+f$ corresponds to the mapping $x\mapsto 2x+1$ and is therefore an element of $\mathcal{S}_{17}$. It follows that the matrix $\begin{pmatrix} I & I\\ I & f\end{pmatrix}$ is invertible and therefore the matrix  $\begin{pmatrix} I & I & I & \hdots & I\\ I & f & f^{i_2} & \hdots & f^{i_r}\end{pmatrix}$ has full rank $34$ for any choices of $i_2,\hdots, i_r$. Adding any blocks of columns of the form $\begin{pmatrix} 0\\ f^{i_s}\end{pmatrix}$ or $\begin{pmatrix}  f^{i_t}\\ 0\end{pmatrix}$ does neither change the rank nor the cycle structure of the Tanner graph of $H$. However, the length of the code increases by $17$ with each added block and its rate is equal to $1-\frac{2}{r+1+a}$, where $a$ is the number of additional blocks of columns. In this way, one can achieve rates that are arbitrarily close to $1$ without obtaining additional cycles.
\end{example}

\begin{example}
Again, take $H=\begin{pmatrix}1 & 1 &  1 & 1& 1 \\ 
1 & f &  f^4 & f^6 & f^8  \\ 
 1& f^{-1}  &  f^{-4} &  f^{-6} & f^{-8}   \\\end{pmatrix}$. This is a $39 \times 65$ matrix of rank $37$ and girth $8$, defining a $(65, 28)$-linear code over $\F_2$ with rate $\approx 0.431$. Now consider an irregular code defined by extended $H$ as follows $$M=\begin{pmatrix}1 & 1 &  1 & 1& 1 & 0 & 0 & 1\\ 
1 & f &  f^4 & f^6 & f^8 & 0 & f & 0 \\ 
 1& f^{-1}  &  f^{-4} &  f^{-6} & f^{-8} & f^{-4} & 0 & 0   \\\end{pmatrix}.$$ Here $M$ is a $39 \times 104$ parity-check matrix of rank $39$ and girth $8$, defining a $(104, 65)$-linear code over $\F_2$ with rate $0.625$.
\end{example}

\section{Conclusion}\label{concl}

We provided a framework to link conditions for cycles in the Tanner graph of an LDPC code constructed from permutation matrices with number theoretic concepts like Sidon sets. To be able to avoid also larger cycles one should consider parity-check matrices built from non-commuting permutations. We leave this problem for future research.


\section*{Acknowledgements}
Julia Lieb acknowledges the support of Swiss National Science Foundation grant n. 188430. The work of Simran Tinani is supported by armasuisse Science and Technology. 
The authors thank Roxana Smarandache for her helpful feedback and comments.

\bibliographystyle{plain}
\bibliography{references}
\end{document}